\long\def\symbolfootnote[#1]#2{\begingroup%
\def\thefootnote{\fnsymbol{footnote}}\footnote[#1]{#2}\endgroup}
\newtheorem{pro}{\rm{\underline{\textbf{Problem}}}}
\newtheorem{thm}{\rm{\underline{\textbf{Theorem}}}}
\newtheorem{lem}{\rm{\underline{\textbf{Lemma}}}}
\begin{document}
\title{Ergodic Sum-Rate Maximization for Fading Cognitive Multiple Access Channels without Successive Interference Cancellation}
\author{~~~~~Xin Kang,~\IEEEmembership{Member,~IEEE}, ~Hon Fah Chong,~\IEEEmembership{Member,~IEEE}, \newline~Yeow-Khiang
Chia,~\IEEEmembership{Member,~IEEE}, ~Sumei
Sun,~\IEEEmembership{Senior Member,~IEEE},
\thanks{X. Kang, H.F. Chong, Y.-K Chia are with Institute for Infocomm Research, 1 Fusionopolis Way,
$\#$21-01 Connexis, South Tower, Singapore 138632 (E-mail: \{xkang,
hfchong, chiayk, sunsm\}@i2r.a-star.edu.sg).} } \markboth{SUBMITTED
FOR REVIEW} {} \maketitle

\begin{abstract}
In this paper, the ergodic sum-rate of a fading cognitive multiple
access channel (C-MAC) is studied, where a secondary network (SN)
with multiple secondary users (SUs) transmitting to a secondary base
station (SBS) shares the spectrum band with a primary user (PU). An
interference power constraint (IPC) is imposed on the SN to protect
the PU. Under such a constraint and the individual transmit power
constraint (TPC) imposed on each SU, we investigate the power
allocation strategies to maximize the ergodic sum-rate of a fading
C-MAC without successive interference cancellation (SIC). In
particular, this paper considers two types of constraints: (1)
\emph{average TPC {\rm {and}} average IPC}, (2) \emph{peak TPC {\rm
{and}} peak IPC}. For the first case, it is proved that the optimal
power allocation is dynamic time-division multiple-access (D-TDMA),
which is exactly the same as the optimal power allocation to
maximize the ergodic sum-rate of the fading C-MAC with SIC under the
same constraints. For the second case, it is proved that the optimal
solution must be at the extreme points of the feasible region. It is
shown that D-TDMA is optimal with high probability when the number
of SUs is large. Besides,  we show that, when the SUs can be sorted
in a certain order, an algorithm with linear complexity can be used
to find the optimal power allocation.

\end{abstract}
\begin{keywords} Cognitive Radio, Multiple Access Channel, Fading Channels, Spectrum Sharing, Ergodic Sum-Rate, Optimal Power Allocation, Non-convex Optimization.
\end{keywords}

\newpage

\section{Introduction}
The demand for frequency resources has dramatically increased due to
the explosive growth of wireless applications and services in recent
years. This poses a big challenge to the current fixed spectrum
allocation policy. On the other hand, a report published by Federal
Communications Commission (FCC) shows that the current scarcity of
spectrum resource is mainly due to the inflexible spectrum
regulation policy rather than the physical shortage of spectrum
\cite{FCC}. Most of the allocated frequency bands are
under-utilized, and the utilization of the spectrum varies in time
and space. Similar observations have also been made in other
countries. In particular, the spectrum utilization efficiency is
shown to be as low as $5\%$ in Singapore \cite{Sin5}. The compelling
need to improve the spectrum utilization and establish more flexible
spectrum regulations motivates the advent of cognitive radio (CR).
Compared to the traditional wireless devices, CR devices can greatly
improve the spectrum utilization by dynamically adjusting their
transmission parameters, such as transmit power, transmission rate
and the operating frequency. Recently, FCC has agreed to open the
licensed, unused television spectrum or the so-called white spaces
to the new, unlicensed, and sophisticatedly designed CR devices.
This milestone change of policy by the FCC indicates that CR is fast
becoming one of the most promising technologies for the future radio
spectrum utilization. This also motivates a wide range of research
in the CR area, including the research work done in this paper.

A popular model widely adopted in CR research is the \emph{spectrum
sharing} model. In a spectrum-sharing CRN, a common way to protect
primary users (PU) is to impose an interference power constraint
(IPC) at the secondary network, which requires the interference
received at PU receiver  to be below a prescribed threshold
\cite{Haykin2005}. Subject to such a IPC, the achievable rates of
Additive White Gaussian Noise (AWGN) channels were investigated in
\cite{Gastpar}. In \cite{Ghasemi}, the authors studied the ergodic
capacity of a single-user CRN under IPC in different fading
environment. In \cite{huang2009spectrum}, the authors studied the
outage performance of such a single-user spectrum-sharing CRN under
a IPC. In \cite{Leila09}, the authors studied the capacity and power
allocation for a spectrum-sharing fading CRN under both peak and
average IPC.  In \cite{kangTWC}, the optimal power allocation
strategies to achieve the ergodic and outage capacity for a
spectrum-sharing fading CRN under different combinations of the
transmit power constraint (TPC) and the IPC were investigated.
However, the aforementioned works only focused on the point-to-point
secondary networks. In \cite{PCheng2005}, from an information
theoretic perspective, the authors investigated the achievable rate
region of a Gaussian C-MAC. In \cite{zhang2008joint} and
\cite{zhang2009cognitive}, the authors investigated the optimal
power allocation strategies for AWGN cognitive multiple access
channels (C-MAC). In \cite{zhang2009ergodic}, the authors
investigated the ergodic sum capacity for a fading C-MAC with
multiple PUs. 
In \cite{XKang2010}, the authors
studied the outage capacity region for a fading C-MAC. However, in
these works, successive interference cancellation (SIC) decoders are
assumed to be available, and thus no mutual interference among the
secondary users (SU) is considered. Different from the
aforementioned works, in this paper, we study the ergodic sum-rate
and the corresponding optimal power allocation strategies of a
fading C-MAC without SIC. Compared with the previous studies with
SIC, the problem studied in this paper is much harder due to the
existence of the mutual interference among SUs, which makes the
problem a nonlinear, nonconvex constrained optimization problem.

Another line of related research
\cite{HanlyPart1,Wyu2004,AGjendemsj2008,Inaltekin2012} focused on
the sum-rate maximization for MAC under non-CR setting (without
IPC). In \cite{HanlyPart1}, the authors investigated the ergodic
capacity region and its optimal power allocation for the fading MAC.
In \cite{Wyu2004}, the authors proposed the iterative water-filling
algorithm to maximize the sum-rate of a multiple-input multiple-out
(MIMO) MAC with SIC under individual power constraints. For sum-rate
maximization of MAC without SIC, in \cite{AGjendemsj2008}, the
authors were able to show the optimality of the binary power
allocation for a two-user network. For arbitrary users, the authors
only numerically illustrated the optimality of binary power
allocation. While in \cite{Inaltekin2012}, the authors analytically
proved that binary power allocation is optimal for any number of
users in terms of maximizing the sum-rate of the MAC without SIC.
Compared with these works, the problem studied in this paper is more
challenging due to the existence of the IPC, which changes the
properties of the optimal power allocation. It is shown that binary
power allocation is no longer optimal for our problem.

The main contribution and the key results of this paper are listed
as follows:
\begin{itemize}
  \item We investigate the optimal power allocation strategies to maximize the ergodic sum-rate of a
fading C-MAC without SIC under both TPC and IPC. In particular, we
consider two types of constraints : (1) \emph{average TPC {\rm
{and}} average IPC}, (2) \emph{peak TPC {\rm {and}} peak IPC}.

 \item For the \emph{average TPC {\rm
{and}} average IPC} case,  we prove that the optimal power
allocation is dynamic time-division multiple-access (D-TDMA), which
is exactly the same as the optimal power allocation given in
\cite{zhang2009ergodic} to maximize the ergodic sum-rate of the
fading C-MAC with SIC under the same constraints.

  \item  For the \emph{peak TPC {\rm {and}} peak IPC} case, we prove that the optimal
solution must be at the extreme points of the feasible region. We
show that D-TDMA is optimal when a  certain condition is satisfied.
It is also shown that D-TDMA is optimal with high probability when
the number of SUs is large. Thus, we can solve the problem by
searching the extreme points of the feasible region when the number
of SUs is small, and by applying the D-TDMA scheme when the number
of SUs is large.

\item For the \emph{peak TPC {\rm {and}} peak IPC} case, we show that when the SUs can be sorted in a certain
order, an algorithm with linear complexity can be developed to find
the optimal power allocation of our problem.

  \item  For the \emph{peak TPC {\rm {and}} peak IPC} case, we also show by simulations that the optimal power allocation to maximize the ergodic sum-rate of the
fading C-MAC with SIC, which we refer to as \emph{SIC-OP}, can be
used as a good suboptimal power allocation for our problem. It is
shown by simulations that SIC-OP is optimal or near-optimal for our
problem when the D-TDMA is not optimal.
\end{itemize}

The rest of the paper is organized as follows. The system model and
power constraints are described in Section \ref{SystemModel}. The
optimal power allocation strategies to maximize the ergodic sum-rate
of the fading C-MAC without SIC are studied in Section
\ref{Sec-OPA}. Then, the simulation results are presented and
analyzed in Section \ref{Sec-simulation}. Section \ref{conclusion}
concludes the paper.

\section{System Model and Power Constraints}\label{SystemModel}
\subsection{System Model}
In this paper, we consider a spectrum sharing CR network consists of
one PU and a $K$-user secondary multiple access network. The
communication links between each SU and the PU receiver (PU-Rx) are
referred as the interference links. The links between the SUs and
the secondary base station (SBS) are referred as the secondary
links. For the convenience of exposition, all the channels involved
are assumed to be block-fading (BF) \cite{Biglieri1998}, i.e., the
channels remain constant during each transmission block, but
possibly change from one block to another. As shown in
Fig.\ref{model}, the channel power gain of the interference link between SU-$i$ and the PU is denoted by $g_i$. 
The channel power gain of the secondary link between SU-$i$ and the
SBS is denoted as $h_{i}$. All these channel power gains are assumed
to be independent and identically distributed (i.i.d.) random
variables (RVs) each having a continuous probability density
function (PDF). All the channel state information (CSI) is assumed
to be perfectly known at both SUs. CSI of the secondary links can be
obtained at SUs by the classic channel training, estimation, and
feedback mechanisms. CSI of the interference links between SUs and
primary receivers can be obtained at SUs via the cooperation of the
primary receivers. The noise at SBS is assumed to be circular
symmetric complex Gaussian variable with zero mean and variance
$\sigma^2$ denoted by ${\mathcal C}{\mathcal N}(0,\sigma^2)$.

\subsection{Power Constraints}\label{Sec-Powerconstraints}
In this paper, we denote the transmit power of SU-$i$ as $P_i$, then
the instantaneous interference received at PU-Rx from SU-$i$ is
$g_iP_i$. Then, the average and peak \emph{interference power
constraint} (IPC) can be described as
\begin{align}
\mbox{Average IPC:}~~&\mathbb{E}\left[\sum_{i=1}^K g_iP_i\right]
\le I^{av}, \label{Con-AIPC}\\
\mbox{Peak IPC:}~~&\sum_{i=1}^K g_iP_i\le I^{pk},\label{Con-PIPC}
\end{align}
where $I^{av}$ denotes the limit of average received interference at
the PU, and $I^{pk}$ denotes the maximum instantaneous interference
that the PU can tolerate. $\mathbb{E}[\cdot]$ denotes the
statistical expectation over all the involved fading channel power
gains. Usually, the average IPC is used to guarantee the long-term
QoS of the PU when it provides delay-insensitive services. When the
service provided by the PU has an instantaneous QoS requirement, the
peak IPC is usually adopted.

In this paper, we also consider the \emph{transmit power constraint}
(TPC) imposed at each SU.  Same as the IPC, two types (both average
and peak) of TPC are considered here. Let $P_i^{av}$ and $P_i^{pk}$
be the average and peak transmit power limit of SU-$i$,
respectively. Then, the average and peak TPC can be described as
\begin{align}
\mbox{Average TPC:}~~&\mathbb{E}\left[P_i\right] \le P_i^{av},~\forall i, \label{Con-ATPC}\\
\mbox{Peak TPC:}~~& P_i\le P_i^{pk},~\forall i,\label{Con-PTPC}
\end{align}
where $\mathbb{E}[\cdot]$ denotes the statistical expectation over
all the involved fading channel power gains. The peak power
limitation is usually due to the nonlinearity of power amplifiers in
practice. The average TPC is usually imposed to meet a long-term
transmit power budget.


\section{Ergodic Sum-rate Maximization for Fading
C-MAC without SIC} \label{Sec-OPA}
Without SIC decoders available at
the SBS, the instantaneous transmission rate of each SU is given by
\begin{align}
r_i=\ln\left(1+\frac{h_{i}p_i}{\sum_{j=1,j\neq i}^K
h_{j}P_j+\sigma^2}\right), ~\forall i. \label{Eq-ri}
\end{align}

For BF channels, ergodic rate is defined as the maximum achievable
rate averaged over all the fading blocks. Then, the ergodic sum-rate
of the fading C-MAC considered in this paper can be written as
\begin{align}
\mathbb{E}\left[\sum_{i=1}^K \ln\left(1+\frac{
h_{i}P_i}{\sigma^2+\sum_{j=1,j\neq i}^K h_{j}P_j}\right)\right].
\end{align}

In the following, we study power allocation strategy to maximize the
ergodic sum-rate of the fading C-MAC subject to the power
constraints given in Section \ref{Sec-Powerconstraints}.
\subsection{Average TPC and Average IPC}
Under average TPC and average IPC, the optimal power allocation to
maximize the ergodic sum-rate of the fading C-MAC can be obtained by
solving the following optimization problem:
\begin{pro}\label{Problem-AverageAverage}
\begin{align}
\max_{P_i\ge 0,\forall i}~~ &\mathbb{E}\left[\sum_{i=1}^K
\ln\left(1+\frac{
h_{i}P_i}{\sigma^2+\sum_{j=1,j\neq i}^K h_{j}P_j}\right)\right],\\
\mbox{s.t.}~~~~ &\eqref{Con-AIPC},~\eqref{Con-ATPC}.
\end{align}
\end{pro}

It is not difficult to observe that Problem
\ref{Problem-AverageAverage} is a non-convex optimization problem.
Thus, we cannot solve it by the standard convex optimization
techniques. To solve Problem \ref{Problem-AverageAverage}, we first
look at the following problem.
\begin{pro}\label{Problem-AverageAverage-SIC}
\begin{align}
\max_{P_i\ge 0,\forall i}~~ &\mathbb{E}\left[\ln\left(1+\sum_{i=1}^K\frac{h_{i}P_i}{\sigma^2}\right)\right],\\
\mbox{s.t.}~~~~ &\eqref{Con-AIPC},~\eqref{Con-ATPC}.
\end{align}
\end{pro}
Problem \ref{Problem-AverageAverage-SIC} gives the ergodic sum-rate
for fading C-MAC with SIC, and it has been studied in
\cite{zhang2009ergodic}.  It is shown in \cite{zhang2009ergodic}
(Lemma 3.1) that the optimal solution of Problem
\ref{Problem-AverageAverage-SIC} is: at most one user is allowed to
transmit in each fading block. Based on this fact, we obtain the
following theorem.

\begin{thm}\label{Thm-1}
The optimal solution of Problem \ref{Problem-AverageAverage} is the
same as that of Problem \ref{Problem-AverageAverage-SIC}.
\end{thm}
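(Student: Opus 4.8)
The plan is to prove the theorem by a sandwich argument that compares the two objectives. First I would establish a pointwise relationship between the without-SIC sum-rate and the with-SIC sum-rate that holds for \emph{every} realization of the channel gains. Writing $a_i = h_i P_i/\sigma^2$, the without-SIC rate of SU-$i$ is $\ln(1 + a_i/(1+\sum_{j\neq i} a_j))$, while the with-SIC sum-rate is $\ln(1+\sum_i a_i)$. Fixing any ordering of the users, I would telescope the with-SIC term as $\ln(1+\sum_i a_i)=\sum_i \ln(1 + a_i/(1+\sum_{j<i} a_j))$, and then observe that $\sum_{j<i} a_j \le \sum_{j\neq i} a_j$, so each without-SIC term is dominated by the corresponding telescoped term. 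Summing over $i$ shows that the without-SIC sum-rate never exceeds the with-SIC sum-rate for any power profile and any channel state. Taking expectations and then the supremum over the feasible set, which is identical for both problems since both are constrained only by the average IPC \eqref{Con-AIPC} and the average TPC \eqref{Con-ATPC}, yields that the optimal value of Problem \ref{Problem-AverageAverage} is at most the optimal value of Problem \ref{Problem-AverageAverage-SIC}.

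Second, I would invoke Lemma 3.1 of \cite{zhang2009ergodic}, which guarantees that an optimal allocation of Problem \ref{Problem-AverageAverage-SIC} is D-TDMA: in almost every fading block at most one SU is assigned nonzero power. The key observation is that under such an allocation the interference term $\sum_{j\neq i} h_j P_j$ vanishes for the single active user, so in that block the without-SIC objective of Problem \ref{Problem-AverageAverage} coincides exactly with the with-SIC objective of Problem \ref{Problem-AverageAverage-SIC}, both equalling $\ln(1 + h_i P_i/\sigma^2)$. Blocks in which two or more users tie occur with probability zero, because the channel gains have continuous densities, and hence contribute nothing in expectation. Consequently this D-TDMA allocation, which is feasible for Problem \ref{Problem-AverageAverage} since the constraints are the same, attains in Problem \ref{Problem-AverageAverage} exactly the optimal value of Problem \ref{Problem-AverageAverage-SIC}.

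Combining the two steps closes the sandwich: the optimal value of Problem \ref{Problem-AverageAverage} is no larger than that of Problem \ref{Problem-AverageAverage-SIC}, yet the D-TDMA optimizer of Problem \ref{Problem-AverageAverage-SIC} is feasible for Problem \ref{Problem-AverageAverage} and achieves that same value; hence the two optimal values coincide and the same (D-TDMA) allocation is optimal for both. I expect the only delicate points to be (i) stating the pointwise inequality cleanly, where the telescoping identity for an arbitrary but fixed user ordering is the crux, and (ii) handling the measure-zero tie events and the almost-everywhere nature of the D-TDMA characterization so that the expectations align exactly. The conceptual heart, and the reason the reduction works at all, is that D-TDMA eliminates all mutual interference, which is precisely the only regime in which treating interference as noise loses nothing relative to SIC.
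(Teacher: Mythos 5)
Your proposal is correct and follows essentially the same route as the paper's own proof: the telescoping expansion of the with-SIC sum-rate to get the pointwise upper bound (Step 1 in the paper), followed by invoking Lemma 3.1 of \cite{zhang2009ergodic} to note that the D-TDMA optimizer of Problem \ref{Problem-AverageAverage-SIC} is feasible for Problem \ref{Problem-AverageAverage} and attains the bound there (Step 2). If anything, your write-up is slightly more explicit than the paper at the one point it glosses over, namely that under D-TDMA the interference term vanishes so the two objectives coincide exactly.
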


\begin{proof}
It is observed that the constraints of Problem
\ref{Problem-AverageAverage} and Problem
\ref{Problem-AverageAverage-SIC} are exactly the same. Thus, the
feasible sets of Problem \ref{Problem-AverageAverage} and Problem
\ref{Problem-AverageAverage-SIC} are the same. Now, suppose
$\boldsymbol{P}=[P_1~ P_2~ \cdots ~P_K]^T$ is a feasible solution of
Problem \ref{Problem-AverageAverage}. The rest of the proof consists
of two steps.

\emph{Step $1$:} Since $\boldsymbol{P}$ is a feasible solution of
Problem \ref{Problem-AverageAverage}, it is also a feasible solution
of Problem \ref{Problem-AverageAverage-SIC}. Now, we show that the
value of the objective function of Problem
\ref{Problem-AverageAverage-SIC} under $\boldsymbol{P}$ is an
upper-bound of that of Problem \ref{Problem-AverageAverage} under
the same $\boldsymbol{P}$, i.e., $\mathbb{E}\left[\sum_{j=1}^K
\ln\left(1+\frac{ h_{j}P_j}{\sigma^2+\sum_{i=1,i\neq j}^K
h_{i}P_i}\right)\right] \le
\mathbb{E}\left[\ln\left(1+\sum_{i=1}^K\frac{h_{i}P_i}{\sigma^2}\right)\right]$.
Since the expectation operation is linear, it is equivalent to show
that $\sum_{j=1}^K \ln\left(1+\frac{
h_{j}P_j}{\sigma^2+\sum_{i=1,i\neq j}^K h_{i}P_i}\right)\le
\ln\left(1+\sum_{i=1}^K\frac{h_{i}P_i}{\sigma^2}\right)$, which is
given below.
\begin{align}
\ln\left(1+\sum_{i=1}^K\frac{h_{i}P_i}{\sigma^2}\right)=&\ln\left(\frac{\sigma^2+\sum_{i=1}^K
h_{i}P_i}{\sigma^2}\right)\nonumber
\\=&\ln\left[\left(\frac{\sigma^2+\sum_{i=1}^K
h_{i}P_i}{\sigma^2+\sum_{i=2}^K
h_{i}P_i}\right)\left(\frac{\sigma^2+\sum_{i=2}^K
h_{i}P_i}{\sigma^2+\sum_{i=3}^K
h_{i}P_i}\right)\cdots\left(\frac{\sigma^2+\sum_{i=K}^K
h_{i}P_i}{\sigma^2}\right)\right]\nonumber
\\\stackrel{a}{=}&\sum_{j=1}^K
\ln\left(\frac{\sigma^2+\sum_{i=j}^K
h_{i}P_i}{\sigma^2+\sum_{i=j+1}^K h_{i}P_i}\right)\nonumber
\\=&\sum_{j=1}^K \ln\left(1+\frac{
h_{j}P_j}{\sigma^2+\sum_{i=j+1}^K h_{i}P_i}\right)\nonumber
\\\stackrel{b}{\ge}&\sum_{j=1}^K \ln\left(1+\frac{
h_{j}P_j}{\sigma^2+\sum_{i=1,i\neq j}^K h_{i}P_i}\right),
\end{align}
where we introduce a dumb item $\sum_{i=K+1}^K h_{i}P=0$ in the
equality ``a'' for notational convenience. The inequality ``b''
follows from the fact that $\sum_{i=1,i\neq j}^K h_{i}P_i \ge
\sum_{i=j+1}^K h_{i}P_i, \forall j$.

\emph{Step $2$:} Now, we show that the optimal solution of Problem
\ref{Problem-AverageAverage} is the same as that of Problem
\ref{Problem-AverageAverage-SIC}. Since it is proved in
\cite{zhang2009ergodic} (Lemma 3.1) that the optimal solution of
Problem \ref{Problem-AverageAverage-SIC} is: at most one user is
allowed to transmit in each fading block. It is easy to observe that
the optimal solution of Problem \ref{Problem-AverageAverage-SIC} is
a feasible solution of Problem \ref{Problem-AverageAverage}. Since
we have shown in Step $1$ that Problem
\ref{Problem-AverageAverage-SIC} provides an upper-bound of Problem
\ref{Problem-AverageAverage} for the same $\boldsymbol{P}$. Thus, it
is easy to observe that the optimal solution of Problem
\ref{Problem-AverageAverage} must be the same as that of Problem
\ref{Problem-AverageAverage-SIC}, which is: at most one user is
allowed to transmit in each fading block.
\end{proof}

Since in Theorem \ref{Thm-1}, we have shown that the optimal
solution of Problem \ref{Problem-AverageAverage} is the same as that
of Problem \ref{Problem-AverageAverage-SIC}. Thus, the optimal power
allocation strategies for Problem \ref{Problem-AverageAverage} can
be obtained in the same way as \cite{zhang2009ergodic}. Interested
readers can refer to Lemma  3.1 and 3.2 in \cite{zhang2009ergodic}
for details.

\subsection{Peak TPC and Peak IPC}
Under peak TPC and peak IPC, the optimal power allocation to
maximize the ergodic sum-rate of the fading C-MAC can be obtained by
solving the following optimization problem:
\begin{pro}\label{Problem-Peakpeak}
\begin{align}
\max_{P_i\ge 0, \forall i}~~ &\mathbb{E}\left[\sum_{i=1}^K
\ln\left(1+\frac{
h_{i}P_i}{\sigma^2+\sum_{j=1,j\neq i}^K h_{j}P_j}\right)\right],\\
\mbox{s.t.}~~ & \eqref{Con-PIPC},~\eqref{Con-PTPC}.
\end{align}
\end{pro}
Since all the constraints involved are instantaneous power
constraints, Problem \ref{Problem-Peakpeak} can be decomposed into a
series of identical subproblems each for one fading state, which is
\begin{pro}\label{Problem-Peakpeak2}
\begin{align}
\max_{P_i\ge 0, \forall i}~~ &\sum_{i=1}^K \ln\left(1+\frac{
h_{i}P_i}{\sigma^2+\sum_{j=1,j\neq i}^K h_{j}P_j}\right),\label{Eq-PeakObj}\\
\mbox{s.t.}~~ & \eqref{Con-PIPC},~\eqref{Con-PTPC}.
\end{align}
\end{pro}

It can be verified that Problem \ref{Problem-Peakpeak2} is
non-convex. Thus, we cannot solve it directly by the standard convex
optimization techniques. To solve Problem \ref{Problem-Peakpeak2},
we first investigate its properties.

\begin{lem}\label{Proposition-boundary}
The optimal solution $\boldsymbol{P}^*$ of Problem
\ref{Problem-Peakpeak2} must be at the boundary of the feasible
region of Problem \ref{Problem-Peakpeak2}.
\end{lem}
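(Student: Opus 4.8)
The plan is to show that the objective in \eqref{Eq-PeakObj} can always be strictly increased by scaling up all transmit powers simultaneously, so that no strictly interior feasible point can be optimal. Concretely, I would argue by contradiction: suppose $\boldsymbol{P}^0$ lies in the interior of the feasible region, i.e. $0<P_i^0<P_i^{pk}$ for all $i$ and $\sum_{i=1}^K g_iP_i^0<I^{pk}$. I would then consider the scaled power vector $t\boldsymbol{P}^0$ for $t\ge 1$ and track both its feasibility and the value of the objective along this ray emanating from the origin.

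The key step is to establish that each per-user rate is nondecreasing along this ray. Writing the received SINR of SU-$i$ under the scaled powers as
\begin{align}
\gamma_i(t)=\frac{t\,h_iP_i^0}{\sigma^2+t\sum_{j=1,j\neq i}^K h_jP_j^0},
\end{align}
a direct differentiation gives $\frac{d\gamma_i}{dt}=\frac{h_iP_i^0\,\sigma^2}{\left(\sigma^2+t\sum_{j\neq i}h_jP_j^0\right)^2}\ge 0$, with strict inequality whenever $P_i^0>0$. Since $\ln(1+\gamma_i)$ is increasing in $\gamma_i$, each term $r_i$ is strictly increasing in $t$ at an interior point, and hence so is the entire sum $\sum_{i=1}^K\ln\left(1+\gamma_i(t)\right)$. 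This is the heart of the argument: the mutual interference terms scale together with the desired signal, and the residual noise $\sigma^2$ makes the SINR strictly improve as the common scaling factor grows.

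It then remains to confirm feasibility for some $t>1$. Because the origin is feasible and all constraints hold strictly at $\boldsymbol{P}^0$, continuity guarantees a largest scaling $t^*>1$ at which the first constraint (either the peak IPC \eqref{Con-PIPC} or one of the peak TPCs \eqref{Con-PTPC}) becomes active, so $t^*\boldsymbol{P}^0$ lies on the boundary. By the monotonicity just established, the objective at $t^*\boldsymbol{P}^0$ strictly exceeds that at $\boldsymbol{P}^0$, contradicting the optimality of $\boldsymbol{P}^0$. Therefore the optimal $\boldsymbol{P}^*$ must lie on the boundary of the feasible region.

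I expect the main subtlety, rather than a genuine obstacle, to be the choice of the right perturbation. Increasing a single $P_k$ does not necessarily increase the objective, since raising one user's power inflates the interference seen by every other user, so the coordinate-wise partial derivative $\partial f/\partial P_k$ can be negative. It is only the \emph{uniform} scaling of all powers that guarantees monotonic improvement, so the argument must be phrased in terms of the ray $t\boldsymbol{P}^0$ rather than an axis-aligned move. The remaining details (strictness as soon as at least one $P_i^0>0$, and the fact that the origin itself already lies on the boundary through $P_i=0$) are routine.
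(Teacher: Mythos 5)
Your proof is correct, but it takes a genuinely different route from the paper. The paper argues coordinate-wise: fixing all powers except $P_n$, it computes $\frac{\partial f}{\partial P_n}$ and shows the bracketed factor $1-\sum_{i\neq n}\frac{h_iP_i}{\sigma^2+\sum_{j\neq i}h_jP_j}$ is strictly increasing in $P_n$, so the derivative changes sign at most once (from negative to positive); hence $f$ is quasiconvex along each coordinate and can be strictly increased by pushing $P_n$ to $0$ or to its maximal feasible value $\min\left\{P_n^{pk},\left(I-\sum_{i\neq n}g_iP_i^*\right)/g_n\right\}$. You instead scale the whole vector along the ray $t\boldsymbol{P}^0$, and your key computation $\frac{d\gamma_i}{dt}=\frac{h_iP_i^0\,\sigma^2}{\left(\sigma^2+t\sum_{j\neq i}h_jP_j^0\right)^2}>0$ (valid since every $P_i^0>0$ at an interior point and $\sigma^2>0$) is right, as is the feasibility step: the constraints are linear, so the ray stays feasible up to a finite $t^*>1$ where some constraint becomes active. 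Your approach is shorter and more elementary --- one derivative, no case analysis over which endpoint to move to, and your observation that a single-coordinate increase can decrease $f$ correctly explains why a naive monotone coordinate move fails (the paper circumvents this not by monotonicity but by the single-sign-change structure). What the paper's heavier machinery buys is reuse: the same coordinate-wise quasiconvexity argument is the engine of Theorem \ref{Proposition-extreme} (at most one fractional user, i.e., optimality at extreme points), a refinement your scaling argument cannot reach, since uniform scaling preserves the power ratios and can only carry you to the boundary, not to a vertex.
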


\begin{proof}
This can be proved by contradiction. Suppose the optimal solution
$\boldsymbol{P}^*$ of Problem \ref{Problem-Peakpeak2} is in the
interior of the feasible region, i.e., $0<P_i^*<P_i^{pk}, \forall i$
and $\sum_{i=1}^K g_iP^*_i< I$.

Now, we look at the power allocation $P_n$ of SU-$n$. For
convenience, we denote \eqref{Eq-PeakObj} as
$f\left(\boldsymbol{P}\right)$. Then, $f\left(\boldsymbol{P}\right)$
can be rewritten  as
\begin{align}
f\left(\boldsymbol{P}\right)=\ln\left(1+\frac{
h_{n}P_n}{\sigma^2+\sum_{j=1,j\neq n}^K
h_{j}P_j}\right)+\sum_{i=1,i\neq n}^K\ln\left(1+\frac{
h_{i}P_i}{\sigma^2+\sum_{j=1,j\neq i}^K h_{j}P_j}\right).
\end{align}
Taking the derivative of $f\left(\boldsymbol{P}\right)$ with respect
to $P_n$, we have
\begin{align}
\frac{\partial f\left(\boldsymbol{P}\right)}{\partial
P_n}&=\frac{1}{1+\frac{ h_{n}P_n}{\sigma^2+\sum_{j=1,j\neq n}^K
h_{j}P_j}}*\frac{h_{n}}{\sigma^2+\sum_{j=1,j\neq n}^K h_{j}P_j} \nonumber\\
&+\sum_{i=1,i\neq
n}^K\frac{1}{1+\frac{h_iP_i}{\sigma^2+\sum_{j=1,j\neq i}^K
h_{j}P_j}}*\left(-\frac{h_iP_i}{\left(\sigma^2+\sum_{j=1,j\neq i}^K
h_{j}P_j\right)^2}\right)*h_n \nonumber\\
&=\frac{h_n}{\sigma^2+\sum_{j=1}^K h_{j}P_j}-\sum_{i=1,i\neq
n}^K\frac{ h_iP_i h_n}{\left(\sigma^2+\sum_{j=1}^K
h_{j}P_j\right)\left(\sigma^2+\sum_{j=1,j\neq i}^K
h_{j}P_j\right)}\nonumber\\
&=\frac{h_n}{\sigma^2+\sum_{j=1}^K h_{j}P_j}\left(1-\sum_{i=1,i\neq
n}^K\frac{ h_iP_i}{\left(\sigma^2+\sum_{j=1,j\neq i}^K
h_{j}P_j\right)}\right).
\end{align}

It is observed that $q(P_n)\triangleq1-\sum_{i=1,i\neq n}^K\frac{
h_iP_i h_n}{\left(\sigma^2+\sum_{j=1,j\neq i}^K h_{j}P_j\right)}$ is
a strictly increasing function with respect to $P_n$. Then, the
solution to $q(P_n)=0$ is unique. Consequently, the solution to
$\frac{\partial f\left(\boldsymbol{P}\right)}{\partial P_n}=0$ is
also unique since $\frac{h_n}{\sigma^2+\sum_{j=1}^K h_{j}P_j}$ is
strictly positive. Denote the solution of $\frac{\partial
f\left(\boldsymbol{P}\right)}{\partial P_n}=0$ as $\tilde{P}_n$, and
we refer to $\tilde{P}_n$ as the turning point. Then, on the left
side of the turning point, $\frac{\partial
f\left(\boldsymbol{P}\right)}{\partial P_n}$ is always negative,
thus $f(0)>f(P_n), \forall P_n \in \left[0,\tilde{P}_n\right]$. On
the right side of the turning point, $\frac{\partial
f\left(\boldsymbol{P}\right)}{\partial P_n}$ is always positive,
thus $f(P_n)<f(P_n^{mx}), \forall P_n \in
\left[\tilde{P}_n,P_n^{mx}\right]$, where
$P_n^{mx}=\min\left\{P_n^{pk},\left(I-\sum_{i=1,i\neq n}^K
g_iP^*_i\right)\big/{h_n}\right\}$. Thus, it is clear that the value
of $f\left(\boldsymbol{P}\right)$ can be increased by moving $P_n$
to the boundary. This contradicts with our assumption that
$\boldsymbol{P}^*$ is the optimal solution. Thus, Lemma
\ref{Proposition-boundary} is proved.
\end{proof}

Based on the result of Lemma \ref{Proposition-boundary}, we are able
to obtain the following theorem.
\begin{thm}\label{Proposition-extreme}
The optimal solution $\boldsymbol{P}^*$ of Problem
\ref{Problem-Peakpeak2} must be at the extreme point of the feasible
region of Problem \ref{Problem-Peakpeak2}, i.e., at most one user's
power allocation is fractional.
\end{thm}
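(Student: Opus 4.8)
The plan is to strengthen Lemma~\ref{Proposition-boundary} from ``boundary'' to ``extreme point'' by a contradiction-and-reduction argument. First I would note the geometric content of the statement: the feasible region is the polytope $\{\boldsymbol P : 0\le P_i\le P_i^{pk},\ \sum_i g_iP_i\le I\}$, and a vertex of this polytope must have $K$ linearly independent active constraints; since there is only a single coupling constraint (the IPC) and all others are coordinate box constraints, every vertex has at most one coordinate strictly between $0$ and $P_i^{pk}$. Thus ``extreme point'' is exactly ``at most one fractional user,'' and it suffices to show that any optimal $\boldsymbol P^*$ with two or more fractional coordinates can be replaced, without decreasing the objective, by a feasible point with strictly fewer fractional coordinates.

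So suppose, for contradiction, that $\boldsymbol P^*$ is optimal and that two users $m,n$ are fractional, i.e. $0<P_m^*<P_m^{pk}$ and $0<P_n^*<P_n^{pk}$. Reusing the derivative computation in the proof of Lemma~\ref{Proposition-boundary}, the single-variable restriction of $f$ to $P_m$ (all other powers fixed) is U-shaped, so its maximiser lies at an endpoint of the feasible interval; since $P_m^*$ is fractional this forces $P_m^*$ to equal its IPC-limited endpoint, and similarly for $n$. In particular the IPC must be active, $\sum_i g_iP_i^*=I$. I would then fix every user other than $m,n$ at $P_i^*$ and hold the combined interference budget $B=g_mP_m+g_nP_n$ constant, parametrising the exchange line by $P_m$ with $P_n=(B-g_mP_m)/g_n$. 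Every point of this segment is feasible (the IPC stays exactly at $I$ and the untouched box constraints are unaffected), and at each endpoint of the segment one of $P_m,P_n$ hits $0$ or its peak, reducing the number of fractional users.

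It then remains to show that the restriction $\phi(P_m):=f$ along this segment attains its maximum at an endpoint. Writing $T_{-i}=\sigma^2+\sum_{j\neq i}h_jP_j$, I would compute $\phi'(P_m)=\frac{\partial f}{\partial P_m}-\frac{g_m}{g_n}\frac{\partial f}{\partial P_n}$ using the gradient formula already derived, namely $\frac{\partial f}{\partial P_i}=\frac{h_i}{\sigma^2+\sum_j h_jP_j}\bigl(1-\sum_{k\neq i}\frac{h_kP_k}{T_{-k}}\bigr)$. The goal is to prove that $\phi$ is quasiconvex along the segment --- that $\phi'$ changes sign at most once, from negative to positive --- exactly as the single-variable function in Lemma~\ref{Proposition-boundary} was shown to be U-shaped; this would place the maximum of $\phi$ at an endpoint. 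Pushing $\boldsymbol P^*$ to that endpoint does not decrease $f$ and eliminates at least one fractional coordinate; iterating the exchange drives the number of fractional users down to at most one, contradicting the assumption and proving the theorem.

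The hard part will be the quasiconvexity of $\phi$ along the coupled exchange line. In the single-variable case the decisive quantity was manifestly monotone because raising one power only inflated the interference terms $T_{-k}$ of the other users. Here the situation is more delicate: as $P_m$ increases, $P_n$ decreases, so user $m$'s SINR rises while user $n$'s falls, and the remaining users' interference terms all drift in the common direction determined by the sign of $\alpha:=h_m-h_ng_m/g_n$ (the slope of $\sigma^2+\sum_j h_jP_j$ along the line). Consequently $f$ need not be convex along the segment, and I would have to establish the single sign-change of $\phi'$ directly --- for instance by showing the bracketed expression $h_m\bigl(1-\sum_{k\neq m}\frac{h_kP_k}{T_{-k}}\bigr)-\frac{g_mh_n}{g_n}\bigl(1-\sum_{k\neq n}\frac{h_kP_k}{T_{-k}}\bigr)$ is increasing in $P_m$, treating the two signs of $\alpha$ separately. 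A minor additional point is strictness: if an endpoint only ties the optimal value, the argument still produces an optimal extreme point, which is all the statement claims.
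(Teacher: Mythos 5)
Your proof skeleton coincides with the paper's own proof of Theorem~\ref{Proposition-extreme}: use Lemma~\ref{Proposition-boundary} to conclude that a fractional coordinate can occur only when the IPC is tight; then, with two fractional users $m,n$, move along the segment $g_mP_m+g_nP_n=\mathrm{const}$ with all other powers frozen; observe that the segment is feasible and that its endpoints have fewer fractional coordinates; and finish by showing that the restriction $\phi$ of $f$ to the segment is maximized at an endpoint. Your geometric identification of extreme points with ``at most one fractional user,'' your derivation that the IPC must be active, and your remark on strictness are all sound and agree with the paper.

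The gap is that the decisive step---that $\phi'$ changes sign at most once, from negative to positive---is never proved: you explicitly set it aside as ``the hard part'' and only sketch a plan for it. That step is the entire content of the theorem beyond Lemma~\ref{Proposition-boundary}, and it is exactly what the paper establishes via the computation in \eqref{eq-23}. The paper also uses a device that removes the case analysis you were dreading: it does not fix in advance which variable parametrizes the exchange line, but eliminates the power of whichever fractional user has the \emph{smaller} ratio $h/g$ (when $\frac{h_1}{g_1}>\frac{h_2}{g_2}$ it writes $P_2=(Q-g_1P_1)/g_2$ and differentiates in $P_1$; in the opposite case it swaps the roles). This makes the slope $\alpha=h_m-h_n g_m/g_n$ of $\sigma^2+\sum_j h_jP_j$ along the segment positive by construction, and then every term inside the bracket of \eqref{eq-23} is monotonically increasing in the exchange variable for elementary reasons: the own-rate term has a constant positive numerator over a positive decreasing denominator; the partner's term enters with a minus sign and has a constant positive numerator over an increasing denominator; and each third-party term has a constant numerator proportional to $\alpha>0$ over an increasing denominator, again with a minus sign. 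Since the prefactor multiplying the bracket is positive, the bracket's strict monotonicity gives a unique zero of $\phi'$ and the sign pattern negative-then-positive, i.e., the U-shape, and the contradiction follows. (In fact, if you write out $\phi'$ for a fixed parametrization you will find that the term-by-term monotonicity of the bracket holds for either sign of $\alpha$, so the two-case split you feared is benign---but that computation still has to be carried out; as written, your proposal asserts the key monotonicity rather than proving it.)
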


\begin{proof}
Suppose the optimal solution is $\boldsymbol{P}^*$. Thus, if
$\sum_{i=1}^K g_iP^*_i< I$, based the results of Lemma
\ref{Proposition-boundary}, it is clear that $P_i^*, \forall i$ is
either equal to $0$ or $P_i^{pk}$. Thus, there is no fractional
user.

Now, we consider the case that $\sum_{i=1}^K g_iP^*_i=I$. Suppose
$P^*_1$ and $P^*_2$ are fractional, i.e., $0<P^*_i<P^{pk}_i,\forall
i\in\left\{1,2\right\}$. The interference constraint can be
rewritten as $g_1P^*_1+g_2P^*_2+\sum_{i=3}^Kg_iP^*_i=I$. For
convenience, we define $Q\triangleq I-\sum_{i=3}^Kg_iP^*_i$.

First, we consider the case that $\frac{h_1}{g_1}>\frac{h_2}{g_2}$.
Under this assumption, we write $P_2^*$ as $P^*_2=(Q-g_1P^*_1)/g_2$.
For convenience, we denote \eqref{Eq-PeakObj} as
$f\left(\boldsymbol{P}\right)$. Then, $f\left(\boldsymbol{P}\right)$
can be rewritten  as
\begin{align}
f\left(\boldsymbol{P}^*\right)&=\ln\left(1+\frac{
h_{1}P^*_1}{\sigma^2+\sum_{j=3}^K
h_{j}P^*_j+h_{2}(Q-g_1P^*_1)/g_2}\right)+\ln\left(1+\frac{
h_{2}(Q-g_1P^*_1)/g_2}{\sigma^2+\sum_{j=3}^K
h_{j}P^*_j+h_{1}P^*_1}\right)\nonumber\\&+\sum_{i=3}^K\ln\left(1+\frac{
h_{i}P^*_i}{\sigma^2+\sum_{j=3, j\neq i}^K
h_{j}P^*_j+h_{1}P^*_1+h_{2}(Q-g_1P^*_1)/g_2}\right). \label{Eq-fP}
\end{align}

For notation convenience, define $C\triangleq\sigma^2+\sum_{j=3}^K
h_{j}P^*_j$ and $D_i\triangleq\sigma^2+\sum_{j=3, j\neq i}^K
h_{j}P^*_j$, then \eqref{Eq-fP} can be rewritten as
\begin{align}
f\left(\boldsymbol{P}^*\right)&=\ln\left(1+\frac{
h_{1}P^*_1}{C+h_{2}(Q-g_1P^*_1)/g_2}\right)+\ln\left(1+\frac{
h_{2}(Q-g_1P^*_1)/g_2}{C+h_{1}P^*_1}\right)\nonumber\\&+\sum_{i=3}^K\ln\left(1+\frac{
h_{i}P^*_i}{D_i+h_{1}P^*_1+h_{2}(Q-g_1P^*_1)/g_2}\right).
\end{align}

Taking the derivative of $f\left(\boldsymbol{P}^*\right)$ with
respect to $P^*_1$, we have
\begin{align}
\frac{\partial f\left(\boldsymbol{P}^*\right)}{\partial
P^*_1}&=\frac{1}{C+\frac{h_2}{g_2}Q+\left(h_1-\frac{h_2g_1}{g_2}\right)P^*_1}
\left(\frac{h_1C+\frac{h_1h_2}{g_2}Q}{C+\frac{h_2}{g_2}Q-\frac{h_2g_1}{g_2}P^*_1}
-\frac{\frac{h_2g_1}{g_2}C+\frac{h_1h_2}{g_2}Q}{C+h_1P^*_1}\right.
\nonumber
\\&\left.-\sum_{i=3}^K\frac{\left(h_1-\frac{h_2g_1}{g_2}\right)h_iP^*_i}{D_i+\frac{h_2}{g_2}Q+\left(h_1-\frac{h_2g_1}{g_2}\right)P^*_1}\right),\label{eq-23}
\end{align}
Since $\frac{h_1}{g_1}>\frac{h_2}{g_2}$, $\frac{\partial
f\left(\boldsymbol{P}^*\right)}{\partial P^*_1}$ is a strictly
increasing function with respect to $P^*_1$. Thus, the solution to
$\frac{\partial f\left(\boldsymbol{P}^*\right)}{\partial P^*_1}=0$
is unique. Denote the solution of $\frac{\partial
f\left(\boldsymbol{P}^*\right)}{\partial P^*_1}=0$ as
$\tilde{P}^*_1$, and we refer to $\tilde{P}^*_1$ as the turning
point. Then, on the left side of the turning point, $\frac{\partial
f\left(\boldsymbol{P}^*\right)}{\partial P^*_1}$ is always negative,
thus $f(0)>f(P_1), \forall P_1 \in \left[0,\tilde{P}^*_1\right]$. On
the right side of the turning point, $\frac{\partial
f\left(\boldsymbol{P}^*\right)}{\partial P^*_1}$ is always positive,
thus $f(P_1)<f(P_1^{pk}), \forall P_1 \in
\left[\tilde{P}^*_1,P_1^{pk}\right]$. Thus, it is clear that the
value of $f\left(\boldsymbol{P}^*\right)$ can be increased by moving
$P^*_1$ to $0$ or $P_1^{pk}$. This contradicts with our assumption
that $\boldsymbol{P}^*$ is the optimal solution.

Now, we consider the case that $\frac{h_1}{g_1}<\frac{h_2}{g_2}$.
For this case, we can write $P_1^*$ as $P_1^*=(Q-g_2P^*_2)/g_1$.
Then, using the same approach, we can show that the value of
$f\left(\boldsymbol{P}^*\right)$ can be increased by moving $P^*_2$
to $0$ or $P_2^{pk}$.

Combining the above results, it is observed that at most one user's
power allocation can be fractional. Theorem
\ref{Proposition-extreme} is thus proved.
\end{proof}

Based on Theorem \ref{Proposition-extreme}, we can easily find the
optimal solution $\boldsymbol{P}^*$ of Problem
\ref{Problem-Peakpeak2} by searching the extreme points when the
number of SUs is relatively small. However, when the number of SUs
is large, this scheme may not be practical due to the high computing
complexity. Fortunately, we are able to to show that with high
probability, the optimal solution is D-TDMA when the number of SUs
is large. This is given in Theorem \ref{Proposition-HighProb}.

To prove Theorem \ref{Proposition-HighProb}, we need the following
lemma.

\begin{lem}\label{Proposition-LargeNumebrSU}
The optimal solution $\boldsymbol{P}^*$ of Problem
\ref{Problem-Peakpeak2} is $P_k^*= \min\left\{P_k^{pk},
\frac{I^{pk}}{g_k}\right\}$  where $k=\mbox{argmax}_i~
\min\left\{h_iP_i^{pk}, \frac{h_i}{g_i}I^{pk}\right\}$, and
$P_i^*=0, \forall i\neq k$, if the condition
$\ln\left(1+h_{k}P_{k}^*/\sigma^2\right)\ge 1$ holds.
\end{lem}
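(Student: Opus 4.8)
The plan is to prove the statement by exhibiting the upper bound $R_k := \ln\!\left(1+h_kP_k^*/\sigma^2\right)$ on the objective over the whole feasible region, and observing that the single-user allocation named in the lemma is feasible and attains exactly $R_k$; this identifies it as an optimal solution. Throughout I would write $x_i = h_iP_i$, $S=\sum_{i=1}^K x_i$, $T=\sigma^2+S$, and record the key feasibility fact that for every $i$, $\,g_iP_i\le\sum_j g_jP_j\le I^{pk}\,$ and $P_i\le P_i^{pk}$ force $x_i\le \min\{h_iP_i^{pk},(h_i/g_i)I^{pk}\}\le \min\{h_kP_k^{pk},(h_k/g_k)I^{pk}\}=h_kP_k^*=:x^{\max}$, the last inequality being the definition of $k$ as the maximizer. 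Thus every feasible point lives in the box $x_i\in[0,x^{\max}]$.

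Next I would derive a sharp upper bound on the objective. Using $1+\mathrm{SINR}_i=(\sigma^2+S)/(\sigma^2+S-x_i)$, the objective of Problem \ref{Problem-Peakpeak2} equals $f(\boldsymbol P)=-\sum_i\ln(1-x_i/T)$. Since $u\mapsto-\ln(1-u)$ is convex on $[0,1)$ and vanishes at $0$, the chord inequality on $[0,\bar a/T]$ with $\bar a:=\max_i x_i$ gives $-\ln(1-x_i/T)\le \frac{x_i}{\bar a}\bigl(-\ln(1-\bar a/T)\bigr)$; summing over $i$ yields
$$f(\boldsymbol P)\ \le\ \frac{S}{\bar a}\,\ln\frac{\sigma^2+S}{\sigma^2+S-\bar a}\ =:\ \psi(\bar a,S).$$
This is the same convexity phenomenon exploited in Step~1 of Theorem \ref{Thm-1} (at $\bar a=S$ it collapses to the SIC bound $\ln(1+S/\sigma^2)$). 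A one-line check shows $\psi(\cdot,S)$ is increasing in its first argument: substituting $u=\bar a/T$ it becomes $\frac{S}{T}\cdot\frac{-\ln(1-u)}{u}$, and $\frac{-\ln(1-u)}{u}=\sum_{n\ge1}u^{n-1}/n$ is increasing on $(0,1)$.

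I would then finish by a case split on $S$, using $\bar a\le x^{\max}$. If $S\le x^{\max}$, then $\bar a\le S<T$ and monotonicity gives $f(\boldsymbol P)\le\psi(S,S)=\ln(1+S/\sigma^2)\le\ln(1+x^{\max}/\sigma^2)=R_k$, with no use of the hypothesis. If $S>x^{\max}$, then $x^{\max}<T$, so monotonicity gives $f(\boldsymbol P)\le\psi(x^{\max},S)=:\phi(S)$, and it remains to prove $\phi(S)\le R_k$ for all $S\ge x^{\max}$. Normalizing $\sigma^2=1$ and writing $a=x^{\max}$, this is the scalar inequality $\frac{S}{a}\ln\frac{1+S}{1+S-a}\le\ln(1+a)$ for $S\ge a$, where the hypothesis reads $\ln(1+a)=R_k\ge 1$. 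The endpoints are favorable: at $S=a$ the left side equals $\ln(1+a)=R_k$, and as $S\to\infty$ it tends to $1\le R_k$; the whole point of the condition $R_k\ge1$ is to make this infinite-horizon limit harmless.

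The main obstacle is precisely this last scalar step. The hypothesis $R_k\ge1$ does \emph{not} make $\phi$ monotone — for $x^{\max}/\sigma^2\in(e-1,2)$ the function first dips and then rises back toward its limit $1$ — so the argument cannot rest on plain monotonicity and must instead establish unimodality. Concretely I would reduce $\phi'(S)\le0$ to $\ln\frac{\sigma^2+S}{\sigma^2+S-x^{\max}}\le \frac{x^{\max}S}{(\sigma^2+S)(\sigma^2+S-x^{\max})}$ and show, by analyzing the difference of the two sides as a single-variable function, that $\phi'$ changes sign at most once (from negative to positive). That single-crossing property forces the maximum of $\phi$ over $[x^{\max},\infty)$ to occur at an endpoint, giving $\phi(S)\le\max\{R_k,1\}=R_k$ and completing the proof. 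Everything preceding this estimate is routine convexity and bookkeeping; the delicate part is the sign analysis of $\phi'$ under the clean sufficient threshold $R_k\ge1$.
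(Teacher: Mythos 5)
Your reductions are all sound, and they in fact reproduce the outer skeleton of the paper's argument: the feasible set of Problem \ref{Problem-Peakpeak2} is contained in the box $0\le h_iP_i\le x^{\max}$ with $x^{\max}=h_kP_k^*$; the named single-user point is feasible and attains $R_k=\ln\left(1+x^{\max}/\sigma^2\right)$; the rewriting $f(\boldsymbol{P})=-\sum_i\ln(1-x_i/T)$, the chord bound $f\le\frac{S}{\bar a}\ln\frac{T}{T-\bar a}$, and the case split on $S\lessgtr x^{\max}$ are all correct. But the proof has a genuine gap exactly at the step you yourself call the ``delicate part'': the scalar inequality $\frac{S}{a}\ln\frac{\sigma^2+S}{\sigma^2+S-a}\le\ln\left(1+a/\sigma^2\right)$ for all $S\ge a$ under the hypothesis $\ln\left(1+a/\sigma^2\right)\ge1$ is never established; you only announce that you \emph{would} show $\phi'$ changes sign at most once. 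That single-crossing claim is the entire mathematical content of the lemma, and it is not routine: substituting $t=a/(\sigma^2+S)$ and $q(t)=-\ln(1-t)/t$, it is equivalent to monotonicity of $t+q(t)/q'(t)$, i.e.\ to the inequality $q(t)\,q''(t)\ge 2\left(q'(t)\right)^2$, which needs a real argument --- and your own correct observation that $\phi$ is non-monotone for $x^{\max}/\sigma^2\in(e-1,2)$ shows that no soft convexity or monotonicity fact will close it. Everything you actually prove is bookkeeping around this unproven core.

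The paper never attempts this analysis: its proof of Lemma \ref{Proposition-LargeNumebrSU} invokes Theorem 4 of \cite{Inaltekin2012}, which already states that for the MAC without SIC under individual peak constraints (and no IPC), single-user transmission by the strongest user is optimal as soon as one user satisfies $\ln\left(1+h_iP_i/\sigma^2\right)\ge1$. The paper then adds precisely your outer reduction: the feasible region is contained in the box $P_i\le T_i=\min\left\{P_i^{pk},I^{pk}/g_i\right\}$, the cited theorem applied on that box yields the single-user point at $T_k$, and that point satisfies the IPC, hence is optimal for the constrained problem. So your attempt amounts to re-proving the cited Inaltekin--Hanly theorem (in its equal-power worst case, which is what your chord bound isolates) and stops short of doing so. To repair it, either carry out the sign analysis of $\phi'$ rigorously, or do what the paper does and cite \cite{Inaltekin2012} for the hard step --- in which case the chord-bound machinery becomes unnecessary.
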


\begin{proof}
It is shown that in \cite{Inaltekin2012} (Theorem 4), the optimal
solution for Problem \ref{Problem-Peakpeak2} without IPC is
single-user transmission if at least one user satisfies
$\ln\left(1+h_{i}P_{i}/\sigma^2\right)\ge 1$, and the channel is
assigned to the user with the largest $h_iP_i^{pk}$ at the current
fading block. Our proof is mainly based on this result.

Define $T_i\triangleq\min\left\{P_i^{pk},
\frac{I^{pk}}{g_i}\right\}$. Suppose there exists at least one user
satisfying the condition $\ln\left(1+h_{i}T_{i}/\sigma^2\right)\ge
1$. Since the condition $\ln\left(1+h_{i}T_{i}/\sigma^2\right)\ge 1$
holds, it follows from \cite{Inaltekin2012}  that the objective
function of Problem \ref{Problem-Peakpeak2} is maximized when only
one user transmits in each fading block. When there  is only one
user transmitting, the objective function of Problem
\ref{Problem-Peakpeak2} reduces to
$\ln\left(1+h_{i}P_{i}/\sigma^2\right)$, and the constraints reduces
to $P_i \le P_i^{pk}$ and $g_i P_i \le I^{pk}$. Clearly, the user
with the largest $h_i T_i$ will maximize the objective function.
Thus, the optimal allocation is $P_k^*= T_k$ where
$k=\mbox{argmax}_i~ h_i T_i$, and $P_i^*=0, \forall i\neq k$. Lemma
\ref{Proposition-LargeNumebrSU} is thus proved.
\end{proof}

\begin{thm}\label{Proposition-HighProb}
When the number of SUs is large, with high probability, the optimal
solution of Problem \ref{Problem-Peakpeak} is D-TDMA, i.e., one user
transmitting in each fading block.
\end{thm}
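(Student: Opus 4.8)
The plan is to deduce this theorem directly from the sufficient condition for D-TDMA optimality established in Lemma \ref{Proposition-LargeNumebrSU}. That lemma guarantees that the per-block optimal solution of Problem \ref{Problem-Peakpeak2} is single-user transmission whenever at least one SU satisfies $\ln\left(1+h_iT_i/\sigma^2\right)\ge 1$, with $T_i\triangleq\min\left\{P_i^{pk}, I^{pk}/g_i\right\}$. Since Problem \ref{Problem-Peakpeak} decomposes into such per-fading-state subproblems, it suffices to show that, as the number of SUs $K$ grows, the probability (over the channel realizations) that at least one user meets this condition tends to one. Equivalently, writing the condition $\ln(1+h_iT_i/\sigma^2)\ge 1$ in the rearranged form $h_iT_i\ge(e-1)\sigma^2$ and defining the event $A_i\triangleq\left\{h_iT_i\ge(e-1)\sigma^2\right\}$, I would show that $\mathbb{P}\left(\bigcup_{i=1}^K A_i\right)\to 1$.

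First I would exploit independence. Because all the channel power gains are i.i.d., the pairs $(h_i,g_i)$ are independent across users, and hence the events $A_1,\dots,A_K$ are mutually independent and identically distributed. Consequently,
\begin{align}
\mathbb{P}\left(\bigcup_{i=1}^K A_i\right)=1-\prod_{i=1}^K\left(1-\mathbb{P}(A_i)\right)=1-(1-p)^K,
\end{align}
where $p\triangleq\mathbb{P}(A_1)$. If $p>0$, then $(1-p)^K\to 0$ and the right-hand side converges to one as $K\to\infty$, which is precisely the claim.

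The crux of the argument is therefore to establish that $p>0$. I would lower-bound $A_i$ by the simpler event in which $T_i$ attains its peak-power value: if $g_i\le I^{pk}/P_i^{pk}$ then $T_i=P_i^{pk}$, and in that case $A_i$ holds as soon as $h_iP_i^{pk}\ge(e-1)\sigma^2$. Invoking the independence of $h_i$ and $g_i$ gives
\begin{align}
p\ge\mathbb{P}\left(h_i\ge\frac{(e-1)\sigma^2}{P_i^{pk}}\right)\mathbb{P}\left(g_i\le\frac{I^{pk}}{P_i^{pk}}\right).
\end{align}
Since $h_i$ and $g_i$ possess continuous PDFs whose supports extend to arbitrarily large and arbitrarily small positive values, respectively (as holds for the standard fading models such as Rayleigh and Nakagami), both factors on the right are strictly positive, so $p>0$. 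Combined with the preceding display, this completes the proof.

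I expect the main obstacle to lie in this final step rather than in the probabilistic bookkeeping: the conclusion $p>0$ hinges on a mild support assumption, namely that $h_i$ can exceed $(e-1)\sigma^2/P_i^{pk}$ and $g_i$ can fall below $I^{pk}/P_i^{pk}$ with positive probability. This is automatic for the usual unbounded fading distributions, but should be stated explicitly, since were $h_i$ confined to too small an interval the sufficient condition of Lemma \ref{Proposition-LargeNumebrSU} could never be met and the asymptotic claim would fail.
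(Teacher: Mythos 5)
Your proof takes essentially the same approach as the paper's: it invokes the sufficient condition of Lemma \ref{Proposition-LargeNumebrSU}, uses the i.i.d.\ assumption on the channel gains to write the probability that at least one user meets the condition as $1-(1-p)^K$, and lets $K\to\infty$. Yours is in fact slightly more complete, since the paper only observes that this probability is monotonically increasing in $K$, whereas you explicitly establish the fact $p>0$ (needed for $1-(1-p)^K\to 1$) via the support assumption on $h_i$ and $g_i$, which the paper leaves implicit.
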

\begin{proof}
From Lemma \ref{Proposition-LargeNumebrSU}, it is known that if
there exists at least one user satisfying the condition
$\ln\left(1+h_{i}T_{i}/\sigma^2\right)\ge 1$ where
$T_i\triangleq\min\left\{P_i^{pk}, \frac{I^{pk}}{g_i}\right\}$, the
optimal solution of Problem \ref{Problem-Peakpeak} is dynamic TDMA.
Since all the channel power gains are i.i.d. , the probability of no
user satisfying $\ln\left(1+h_{i}T_{i}/\sigma^2\right)\ge 1$ is
$1-\left(\mbox{Prob}\left\{\ln\left(1+h_{i}T_{i}/\sigma^2\right)<
1\right\}\right)^K$. It is observed that this probability is a
monotonic increasing function with respect to $K$. Thus, when the
number of SUs is large, with high probability, the condition will
hold. Theorem \ref{Proposition-HighProb} is thus proved.
\end{proof}

Based on these results, we can solve Problem \ref{Problem-Peakpeak}
by searching the extreme points of the feasible region when the
number of SUs is small, and by applying the D-TDMA scheme when the
number of SUs is large. Readers may be interested in the number of
SUs that is required to make the D-TDMA scheme optimal. We have
investigated this issue in the simulation part given in Section
\ref{Sec-simulation}. Please note that the condition given in Lemma
\ref{Proposition-LargeNumebrSU} is only a sufficient condition. In
practice, the probability that D-TDMA is optimal is higher than
$1-\left(\mbox{Prob}\left\{\ln\left(1+h_{i}T_{i}/\sigma^2\right)<
1\right\}\right)^K$. For the commonly used parameters, D-TDMA can
achieve a near-optimal performance when the number of SUs is
moderate (such as $K=5$).

In the above, we have presented the approach to solve Problem
\ref{Problem-Peakpeak2} in general.  In the following, we show that
if the SUs can be sorted in certain order according to their channel
power gains, a simple algorithm with linear time complexity can be
developed to solve Problem \ref{Problem-Peakpeak2}.

\begin{thm}\label{Proposition-specialcase}
If the SUs can be sorted in the following order:
$h_1>h_2>\cdots>h_K$ and
$\frac{g_1}{h_1}<\frac{g_2}{h_2}<\cdots<\frac{g_K}{h_K}$. Then,
there exists an optimal solution, for any two users indexed by $m$
and $n$, if $m<n$, their power allocation satisfies $P^*_m \ge
P^*_n$.

\end{thm}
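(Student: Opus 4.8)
The plan is to run an interchange argument that starts from an optimal solution and reshuffles power between pairs of users until the allocation is non-increasing in the index, never decreasing the objective nor leaving the feasible region. The right coordinates for this are the \emph{received} powers $y_i\triangleq h_iP_i$. In these coordinates the objective becomes $K\ln(\sigma^2+\sum_j y_j)-\sum_i\ln(\sigma^2+\sum_{j\neq i}y_j)$, which is \emph{symmetric} in $(y_1,\dots,y_K)$; the peak TPC reads $y_i\le h_iP_i^{pk}$ and the peak IPC reads $\sum_i (g_i/h_i)\,y_i\le I^{pk}$. The two hypotheses say exactly that, in the index order, the interference price $c_i\triangleq g_i/h_i$ is strictly increasing while the gain $h_i$ is strictly decreasing, so a low-index user is simultaneously the \emph{cheaper} one to excite (less PU interference per unit of received power) and the one with the \emph{larger} received-power cap $h_iP_i^{pk}$. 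The proof should exploit both facts at once.

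First I would show the set of active users can be taken to be a prefix $\{1,\dots,k\}$. Starting from an optimal extreme point (which exists by Theorem \ref{Proposition-extreme}, so at most one user is fractional), suppose some $m<n$ has $P_m^*=0$ while $P_n^*>0$. Relocating user $n$'s received power to user $m$ at \emph{equal} received power leaves the symmetric objective unchanged; it respects the TPC because $m$ has the larger received-power cap, and by $c_m<c_n$ it strictly lowers the consumed interference, so the IPC still holds. This flips the inversion and frees interference budget. Iterating such moves, and re-applying the reduction in the proof of Theorem \ref{Proposition-extreme} to restore a single fractional user, drives the active set to a prefix in which users $1,\dots,k$ sit at their peak. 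It then remains to argue that the lone fractional user may be placed \emph{last}, after which the allocation is $P_1^*=\cdots=P_k^*=P^{pk}\ge P_{k+1}^*\ge 0=\cdots$, i.e.\ monotone.

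To place the fractional user last I would reuse the very mechanism from the proof of Theorem \ref{Proposition-extreme}. If a low-index user $j$ is fractional while a higher-index user $i$ (with $j<i$) is at its peak, I move power between them along the active IPC face, holding $g_jP_j+g_iP_i$ fixed. Because $j<i$ gives $h_j/g_j>h_i/g_i$, the restriction of the objective to this segment has a strictly increasing derivative --- the single-turning-point property already established there --- so its maximum is at an endpoint, letting me push the fractional mass until either $j$ reaches its peak or $i$ switches off. Repeating removes every ``fractional-before-peak'' violation and leaves the fractional user in last position.

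The main obstacle is the mismatch between the coordinate in which the objective is well behaved and the coordinate in which the claim is stated: the objective is symmetric (and monotone) in the received powers $y_i=h_iP_i$, whereas monotonicity is asserted for the transmit powers $P_i=y_i/h_i$, and the ``free'' equal-received-power relocation rescales transmit power by the factor $h_n/h_m$. The two orderings must therefore cooperate, and --- this is the delicate point --- the relocations need the received-power caps $h_iP_i^{pk}$ to be ordered consistently with the interference prices. This holds automatically when the peak budgets are common, $P_i^{pk}\equiv P^{pk}$, which is the setting in which I would carry out the argument; with widely disparate $P_i^{pk}$ a high-gain user's small peak can force the optimum onto a low-gain user and break monotonicity, so some such compatibility is genuinely required. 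Establishing the last-fractional-user step cleanly through the turning-point analysis, and controlling the bookkeeping so the interleaved relocations terminate, is where the real work lies.
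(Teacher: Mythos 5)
Your coordinates $y_i=h_iP_i$, the symmetric rewriting of the objective, and your Step 1 are exactly the paper's Problem \ref{Problem-PeakReformulation} and its Case 2 (the equal-received-power swap that fixes an inversion with $P_m^*=0<P_n^*$, feasible because $h_m>h_n$ and $g_m/h_m<g_n/h_n$); your closing caveat about needing a common peak $P^{pk}$ also matches what the paper implicitly assumes. The genuine gap is Step 2, the ``place the fractional user last'' step, which is precisely the step you admit is ``where the real work lies.'' Your proposed move --- slide between a fractional low-index user $m$ and a peak high-index user $n$ along the active IPC face, holding $g_mP_m+g_nP_n$ fixed, and invoke the turning-point (strict convexity) property from the proof of Theorem \ref{Proposition-extreme} --- does not go through, because that property only forces an improvement when the current point is \emph{interior} to the segment. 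That is the situation in Theorem \ref{Proposition-extreme}'s proof, where both users are fractional. Here $P_n=P^{pk}$, so the segment cannot be traversed in the direction that raises $P_n$: your current point is itself an endpoint of the feasible segment. A strictly convex function attains its maximum at an endpoint, but that endpoint may be the one you are standing on; the other endpoint (where $P_m=P^{pk}$ or $P_n=0$) can have strictly smaller value, so you cannot ``push the fractional mass'' there while preserving optimality.

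The paper handles exactly this configuration (its Case 1) with moves that hold the \emph{total received power} $\sum_i h_iP_i$ fixed rather than the interference. If $h_mP_m^*\ge h_nP^{pk}$, it shifts a small amount $\Delta$ of received power from $n$ to the fractional user $m$: this is feasible (it strictly \emph{reduces} the interference because $g_m/h_m<g_n/h_n$, and $P_m^*<P^{pk}$ leaves TPC slack) and strictly increases the objective, contradicting optimality, so this sub-configuration cannot occur. If $h_mP_m^*<h_nP^{pk}$, it swaps the received powers outright, which is feasible for the same reasons and preserves the objective by symmetry. So ``fractional before peak'' is eliminated by received-power shifts and swaps, not by sliding along the IPC face; in your scheme the IPC face is the wrong direction to move in. A smaller point: ``re-applying the reduction in the proof of Theorem \ref{Proposition-extreme} to restore a single fractional user'' is backwards --- that reduction strictly improves the objective, so it can never be applied to a point that is still optimal; what you actually have is that every objective-preserving relocation keeps the point optimal and hence, by Theorem \ref{Proposition-extreme} itself, it automatically retains at most one fractional user.
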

\begin{proof} 
Assume that the users can be sorted in the following order:
$h_1>h_2>\cdots>h_K$ and
$\frac{g_1}{h_1}<\frac{g_2}{h_2}<\cdots<\frac{g_K}{h_K}$. Consider
two users indexed by $m$ and $n$ with $m<n$. Suppose at the optimal
solution,  $P^*_m<P^*_n$. Now, we show this assumption does not hold
by contradiction.

For convenience, we define $P_i^\prime\triangleq h_iP_i$. Then,
Problem \ref{Problem-Peakpeak2} can be rewritten as
\begin{pro}\label{Problem-PeakReformulation}
\begin{align}
\max_{\boldsymbol{P}}~~ &\sum_{i=1}^K \ln\left(1+\frac{
P_i^\prime}{\sigma^2+\sum_{j=1,j\neq i}^K P_j^\prime}\right),\label{Problem5-Obj}\\
\mbox{s.t.}~~ &P_i^\prime\ge 0, ~\forall i,\\
&P_i^\prime\le h_iP^{pk},~\forall i,\\
&\sum_{i=1}^K \frac{g_i}{h_i}P_i^\prime \le I.
\end{align}
\end{pro}

In Theorem \ref{Proposition-extreme}, we have proved that there is
at most one fractional user. Thus, the value of $P^*_m$ and $P^*_n$
has the following two cases.

Case 1: $0<P^*_m<P^{pk}$ and $P^*_{n}=P^{pk}$. It follows that
$P_m^\prime=h_mP^*_m$ and $P_{n}^\prime=h_{n}P^{pk}$. Then, based on
the relationship between $P_m^\prime$ and $P_n^\prime$, we have the
following two subcases:
\begin{itemize}
\item  {Subcase 1: $P_m^\prime<P_{n}^\prime$}. Now, we swap the power allocation of these two users, i.e.,
$\tilde{P}_m^\prime=h_{n}P^{pk}$ and
$\tilde{P}_{n}^\prime=h_mP^*_m$. Since $h_m>h_n$, it is clear that
$\tilde{P}_m^\prime=h_{n}P^{pk}<h_{m}P^{pk}$. Since
$P_m^\prime<P_{n}^\prime$, it is clear that
$\tilde{P}_{n}^\prime=h_mP^*_m<h_nP^{pk}$. At the same time, since
$\frac{g_m}{h_m}<\frac{g_n}{h_n}$, we have that
$\frac{g_m}{h_m}\tilde{P}_m^\prime+\frac{g_n}{h_n}\tilde{P}_n^\prime<\frac{g_m}{h_m}P_m^\prime+\frac{g_n}{h_n}P_n^\prime$.
Thus, the power allocation $(\tilde{P}_m^\prime,
\tilde{P}_n^\prime)$ is a feasible solution of Problem
\ref{Problem-PeakReformulation}.  Besides, it is observed that the
value of \eqref{Problem5-Obj} under $(\tilde{P}_m^\prime,
\tilde{P}_n^\prime)$ is the same as that under $(P_m^\prime,
P_n^\prime)$. Thus, $(\tilde{P}_m^\prime, \tilde{P}_n^\prime)$ is
also an optimal solution of Problem \ref{Problem-PeakReformulation}.
  \item {Subcase 2: $P_m^\prime>P_{n}^\prime$}. Now, we
consider the power allocation $\tilde{P}_m^\prime=P_m^\prime+\Delta$
and $\tilde{P}_{n}^\prime=P_{n}^\prime-\Delta$, where $\Delta$ is a
small constant such that $\tilde{P}_m^\prime\le h_mP^{pk}$ and
$\tilde{P}_{n}^\prime\ge 0$. Since
$\frac{g_m}{h_m}<\frac{g_n}{h_n}$, it is easy to verify that
$\frac{g_m}{h_m}\tilde{P}_m^\prime+\frac{g_n}{h_n}\tilde{P}_n^\prime<\frac{g_m}{h_m}P_m^\prime+\frac{g_n}{h_n}P_n^\prime$.
Thus, the power allocation $(\tilde{P}_m^\prime,
\tilde{P}_n^\prime)$ is a feasible solution of Problem
\ref{Problem-PeakReformulation}. Define \eqref{Problem5-Obj} as
$f\left(\boldsymbol{P}^\prime\right)$. It follows that
\begin{align}
f\left(\boldsymbol{P}^\prime\right)&=\ln\left(1+\frac{
P_m^\prime}{\sigma^2+\sum_{j\neq m, n}^K
P_j^\prime+P_n^\prime}\right)+\ln\left(1+\frac{
P_n^\prime}{\sigma^2+\sum_{j\neq m, n}^K
P_j^\prime+P_m^\prime}\right)\nonumber
\\&+\sum_{i=1}^K\ln\left(1+\frac{ P_i^\prime}{\sigma^2+\sum_{
j\neq i, m, n}^K P_j^\prime+P_m^\prime+P_n^\prime}\right).
\end{align}

\begin{align}
f\left(\tilde{\boldsymbol{P}}^\prime\right)&=\ln\left(1+\frac{
P_m^\prime+\Delta}{\sigma^2+\sum_{j\neq m, n}^K
P_j^\prime+P_n^\prime-\Delta}\right)+\ln\left(1+\frac{
P_n^\prime-\Delta}{\sigma^2+\sum_{j\neq m, n}^K
P_j^\prime+P_m^\prime+\Delta}\right)\nonumber
\\&+\sum_{i=1}^K\ln\left(1+\frac{ P_i^\prime}{\sigma^2+\sum_{
j\neq i, m, n}^K
P_j^\prime+P_m^\prime+\Delta+P_n^\prime-\Delta}\right).
\end{align}
For convenience, we define $Q^\prime=\sigma^2+\sum_{j\neq m, n}^K
P_j^\prime$.  Then, it follows that
\begin{align}
f\left({\boldsymbol{P}}^\prime\right)-f\left(\tilde{\boldsymbol{P}}^\prime\right)&=\ln\left(\frac{\left(Q^\prime+P_n^\prime-\Delta\right)\left(Q^\prime+P_m^\prime+\Delta\right)}{\left(Q^\prime+P_n^\prime\right)\left(Q^\prime+P_m^\prime\right)}\right)
\nonumber\\
&=\ln\left(\frac{\left(Q^\prime+P_n^\prime\right)\left(Q^\prime+P_m^\prime\right)-\left(P_m^\prime-P_n^\prime\right)\Delta-\Delta^2}{\left(Q^\prime+P_n^\prime\right)\left(Q^\prime+P_m^\prime\right)}\right)
\nonumber\\
&\stackrel{a}{\le} 0,
\end{align}
where ``a'' results from the fact that $\Delta >0$ and
$P_m^\prime>P_n^\prime$. This contradicts with our assumption.
\end{itemize}

Case 2: $P^*_m=0$ and $P^*_{n}>0$. It follows that $P_m^\prime=0$
and $P_{n}^\prime=h_{n}P^*_{n}$. We swap the power allocation of
these two users, i.e., $\tilde{P}_m^\prime=h_{n}P^*_{n}$ and
$\tilde{P}_{n}^\prime=0$. Since $h_m>h_n$, it is clear that
$\tilde{P}_m^\prime=h_{n}P^*_{n}<h_{m}P^{pk}$. At the same time,
since $\frac{g_m}{h_m}<\frac{g_n}{h_n}$, it can be verified that
$\frac{g_m}{h_m}\tilde{P}_m^\prime+\frac{g_n}{h_n}\tilde{P}_n^\prime<\frac{g_m}{h_m}P_m^\prime+\frac{g_n}{h_n}P_n^\prime$.
Thus, the power allocation $(\tilde{P}_m^\prime,
\tilde{P}_n^\prime)$ is a feasible solution of Problem
\ref{Problem-PeakReformulation}.  Besides, it is observed that the
value of \eqref{Problem5-Obj} under $(\tilde{P}_m^\prime,
\tilde{P}_n^\prime)$ is the same as that under $(P_m^\prime,
P_n^\prime)$.  Thus, $(\tilde{P}_m^\prime, \tilde{P}_n^\prime)$ is
also an optimal solution of Problem \ref{Problem-PeakReformulation}.

Thus, combining the results Case 1 and Case 2, it is clear that
there exists an optimal solution: for any two users indexed by $m$
and $n$, if $m<n$, their power allocation satisfies $P^*_m \ge
P^*_n$. Theorem \ref{Proposition-specialcase} is thus proved.
\end{proof}

Based on this theorem, we can develop the Algorithm
\ref{alg:specialcase} with linear complexity to solve Problem
\ref{Problem-Peakpeak2} when the users can be sorted in the order
stated in Theorem \ref{Proposition-specialcase}.

\begin{algorithm}[h!]
\caption{Optimal power allocation for Problem
\ref{Problem-Peakpeak2} with channel
ordering}\label{alg:specialcase}
\begin{algorithmic}[1]

\IF {$g_1P^{pk}>{Q}$}

\STATE $R^*=\log\left(1+\frac{h_1Q}{g_1\sigma^2}\right)$, $k^*=1$.

\ELSE

\STATE Initialize $k=1$.

\STATE Find the largest $k$ that satisfies $\sum_{i=1}^k
g_iP^{pk}\le I^{pk}$ and $k \le K$. Denote this $k$ as $k^L$.

\STATE Initialize
$R(1)=\log\left(1+\frac{h_1P^{pk}}{\sigma^2}\right)$, $k^*=1$,
$R^*=R(1)$.

\FOR {$k=2$ to $k^L$}

\STATE $R(k)=\sum_{i=1}^k \ln\left(1+\frac{
h_{i}P^{pk}}{\sigma^2+\sum_{j=1,j\neq i}^k h_{j}P^{pk}}\right)$.

\IF {$R(k)>R^*$}

\STATE $R^*=R(k)$, $k^*=k$.

\ENDIF

\ENDFOR

\STATE $R(k^L+1)=\sum_{i=1}^{k^L+1} \ln\left(1+\frac{
h_{i}P_{i}}{\sigma^2+\sum_{j=1,j\neq i}^{k^L+1} h_{j}P_{j}}\right)$,
where $P_{i}=P^{pk}, \forall i \le k^L$, and $P_{k^L+1}=
\frac{I^{pk}-\sum_{i=1}^{k^L}g_iP^{pk}}{g_{k^L+1}}$.

\IF {$R(k^L+1)>R^*$}

\STATE $R^*=R(k^L+1)$, $k^*=k^L+1$.

\ENDIF

\ENDIF


\end{algorithmic}
\end{algorithm}

%
%
%
%
%
%
%
%
%
%
%
%
%
%
%
%
%
%
%
%
%
%

\section{Numerical results}\label{Sec-simulation}
In this section, several numerical results are given to evaluate the
performances of the proposed studies. All the channels involved are
assumed to be Rayleigh fading, and thus the channel power gains are
exponentially distributed. Unless specifically stated, we assume the
mean of the channel power gains is one. The noise power $\sigma^2$
at SBS is also assumed to be 1. For convenience, the transmit power
constraint at each SU is assumed to be the same. The numerical
results presented here are obtained by taking average over $10000$
rounds simulations. In this section, we only provide the simulation
results for the ergodic sum-rate under peak TPC and peak IPC. No
simulation results for the ergodic sum-rate under average TPC and
average IPC are provided. This is due to the fact that we have shown
that the optimal power allocation for the ergodic sum-rate
with/without SIC under average TPC and average IPC is the same. As a
result, the simulation results for this case are exactly the same as
those shown in \cite{zhang2009ergodic}.


\subsection{Ergodic Sum-Rate with/without SIC}
First, we compare the ergodic sum-rate for the fading C-MAC
with/without SIC under different combinations of TPC and IPC. In
Fig. \ref{kx2SU}, Fig. \ref{kx5SU} and Fig. \ref{kx10SU}, we show
the results for the fading C-MAC with $K=2,5,10$, respectively. It
is observed from all the figures that the ergodic sum-rate with SIC
is always larger than that without SIC under the same TPC and IPC.
This verifies our result that the ergodic sum-rate with SIC is a
upper-bound of that without SIC. It is also observed that the gap
between the ergodic sum-rate with SIC and that without SIC in
general increases with the increasing of the number of SUs ($K$).
The engineering insight behind this is that when the number of SUs
is small in the C-MAC, it is not necessary to implement SIC at the
SBS due to the cost and complexity. While when the number of SUs is
large, it is worthwhile implementing SIC at the SBS to achieve a
larger sum-rate. It is observed from all the curves that when the
TPC of SU is large, the ergodic sum-rate gap with/without SIC is
negligible. This is due to the following fact. When TPC is very
large, TPC will not be the bottleneck, and the performance of the
C-MAC will only depend on the IPC. It is proved in \cite{XKang2013}
that the ergodic sum-rate with/without SIC under only the IPC is the
same. The optimal resource allocation for both cases are D-TDMA, and
let the SU with the best $h_i/g_i$ to transmit in each fading block.
Thus, from engineering design perspective, it is not necessary to
implement SIC at the SBS when the TPC is relatively large as
compared to the IPC.

\subsection{Optimality of the D-TDMA}

In Fig. \ref{kx1Sudomnumerical}, we numerically compute the
probability of D-TDMA being optimal for different number of SUs
based on the condition given in Lemma
\ref{Proposition-LargeNumebrSU}. First, it is observed that the
probability increases with the increasing of the number of SUs. It
is also observed that the probability increases with the increasing
of $P^{pk}$ for the same number of SUs. When $P^{pk}=5dB$ or $10dB$,
with only $10$ SUs, the probability of D-TDMA being optimal is close
to $1$. When $P^{pk}=0dB$, with $20$ SUs, the probability of D-TDMA
being optimal is more than $95\%$. These indicates that when the
number of SUs is sufficiently large, the D-TDMA is optimal with a
high probability. As pointed out previously, the condition given in
Lemma \ref{Proposition-LargeNumebrSU} is a sufficient condition. In
practice, the probability that D-TDMA is optimal is higher than
the probability shown in Fig. \ref{kx1Sudomnumerical}. 

In Fig. \ref{kx1Sudomsimu}, we compare the ergodic sum-rate under
the optimal power allocation and that under the D-TDMA when the
number of SUs is $5$. The results are obtained by averaging over
$10000$ rounds simulations. It is observed from the figure that when
the TPC is larger than $0dB$, D-TDMA can achieve the same ergodic
sum-rate as the optimal power allocation. Even when the TPC is less
than $0dB$, the gap between the ergodic sum-rate under the optimal
power allocation and that under the D-TDMA is not large. Thus, in
general, we can use the D-TDMA scheme as a good suboptimal scheme
when the number of SUs is larger than $5$.

\subsection{Ergodic Sum-Rate under the SIC-OP}
In this subsection, we compute the optimal power allocation for
C-MAC with SIC first, and then apply the obtained power allocation
to C-MAC without SIC (Problem \ref{Problem-Peakpeak}) as a
suboptimal power allocation. For convenience, we denote the optimal
power allocation for C-MAC with SIC as \emph{SIC-OP}. We then
compare the ergodic sum-rate (without SIC) under the SIC-OP with
that under the D-TDMA. The ergodic sum-rate (without SIC) under the
optimal power allocation are also included as a reference.

In Fig. \ref{kxSICOP1} and Fig. \ref{kxSICOP2}, we assume that there
are $5$ SUs in the network, and the IPC is assumed to be $0 dB$.  In
Fig.  \ref{kxSICOP1}, we assume that the mean of the channel power
gain is 1, i.e., $\mathbb{E}\{h_i\}=\mathbb{E}\{g_i\}=1, \forall i$.
It is observed from Fig. \ref{kxSICOP1} that there exists one
crossing-point, before which SIC-OP performs better than the D-TDMA.
Actually, SIC-OP can achieve the same performance as the optimal
power allocation when the TPC is sufficiently small. After the
crossing point, D-TDMA performs better than the SIC-OP. When the TPC
is sufficiently large, D-TDMA can achieve the same performance as
the optimal power allocation. Similar results can be observed in
Fig. \ref{kxSICOP2}, in which we assume that the mean of the channel
power gain is 0.1, i.e., $\mathbb{E}\{h_i\}=\mathbb{E}\{g_i\}=0.1,
\forall i$. The difference between Fig. \ref{kxSICOP2} and Fig.
\ref{kxSICOP1} is that the crossing point of Fig. \ref{kxSICOP2} has
a larger value of $P^{pk}$ as compared to the the crossing point of
Fig. \ref{kxSICOP1}. This can be explained as follows. According to
Lemma \ref{Proposition-LargeNumebrSU},  the condition for D-TDMA
being optimal is $\ln\left(1+h_{k}P_{k}^*/\sigma^2\right)\ge 1$.
Thus, when the mean of $h_{k}$ is small, a larger $P_{k}^*$ is
needed to make D-TDMA optimal.

In the following, we explain why SIC-OP can achieve the same
performance as the optimal power allocation when the TPC is small.
Now, we look at the sum-rate of MAC without SIC, which is
$\sum_{i=1}^K \ln\left(1+\frac{ h_{i}P_i}{\sigma^2+\sum_{j=1,j\neq
i}^K h_{j}P_j}\right)$. When $\frac{
h_{i}P_i}{\sigma^2+\sum_{j=1,j\neq i}^K h_{j}P_j}$ is small, it is
equivalent to $\sum_{i=1}^K \frac{
h_{i}P_i}{\sigma^2+\sum_{j=1,j\neq i}^K h_{j}P_j}$, since
$\ln(1+x)\thickapprox x$ when $x$ is small. Further, since the TPC
is small, $\sum_{j=1,j\neq i}^K h_{j}P_j\thickapprox\sum_{j=1}^K
h_{j}P_j$. Thus, $\sum_{i=1}^K \frac{
h_{i}P_i}{\sigma^2+\sum_{j=1,j\neq i}^K h_{j}P_j}\thickapprox
 \frac{ \sum_{i=1}^K h_{i}P_i}{\sigma^2+\sum_{j=1}^K
h_{j}P_j}=\frac{1}{1+\sigma^2/\sum_{i=1}^K h_{i}P_i}$. Thus,
maximizing $\sum_{i=1}^K \ln\left(1+\frac{
h_{i}P_i}{\sigma^2+\sum_{j=1,j\neq i}^K h_{j}P_j}\right)$ is
equivalent to maximizing $\sum_{i=1}^K h_{i}P_i$. The sum-rate of
MAC with SIC is obtained by maximizing $\ln\left(1+\sum_{i=1}^K
h_{i}P_i\right)$, which is also equivalent to  maximizing
$\sum_{i=1}^K h_{i}P_i$, since the log function is a monotonic
increasing function.

Now, we explain why this observation is important. With this
observation, we can solve Problem 3 by $\max\{\mbox{SIC-OP},
\mbox{D-TDMA}\}$, which achieve the same performance as the optimal
power allocation for most cases. Besides, the complexity is much
lower than searching the extreme points, especially when the number
of SUs is large.

\section{Conclusions}\label{conclusion}

In this paper, we studied the ergodic sum-rate of a spectrum-sharing
cognitive multiple access channel (C-MAC), where a secondary network
(SN) with multiple secondary users (SUs) shares the spectrum band
with a primary user (PU). We assumed an interference power
constraint at the PU, individual transmit power constraints at the
SUs, and to reduce decoding complexity, no successive interference
cancellation (SIC) at the C-MAC. We investigated the optimal power
allocation strategies for two types of power constraints: (1)
average TPC and average IPC, and (2) peak TPC and peak IPC. For the
average TPC and average IPC case, we proved that the optimal power
allocation is dynamic time-division multiple-access (D-TDMA). For
the peak TPC and peak IPC case, we proved that the optimal solution
must be at the extreme points of the feasible region. We showed that
D-TDMA is optimal with high probability when the number of SUs is
large. We also showed through simulations that the optimal power
allocation to maximize the ergodic sum-rate of the fading C-MAC with
SIC is optimal or near-optimal for our setting when D-TDMA is not
optimal. In addition, when some channel conditions are met, we gave
a linear time complexity algorithm for finding the optimal power
allocation.

\newpage
\begin{figure}[t]
        \centering
        \includegraphics*[width=12cm]{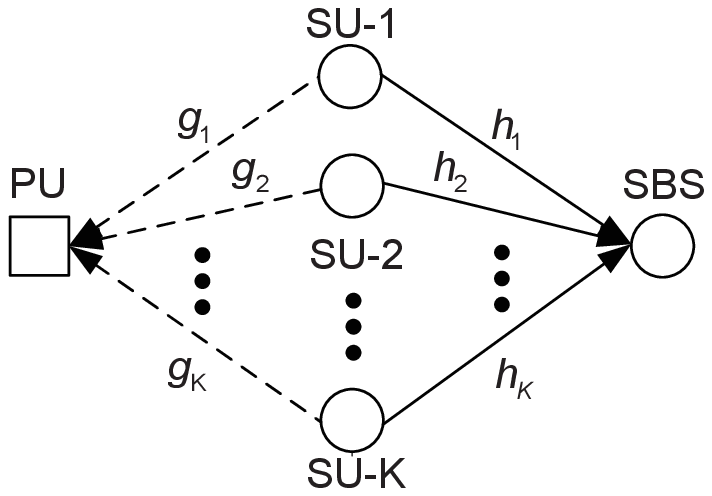}
        \caption{System model for a two-user fading C-MAC}
        \label{model}
\end{figure}


\begin{figure}[t]
        \centering
        \includegraphics*[width=12cm]{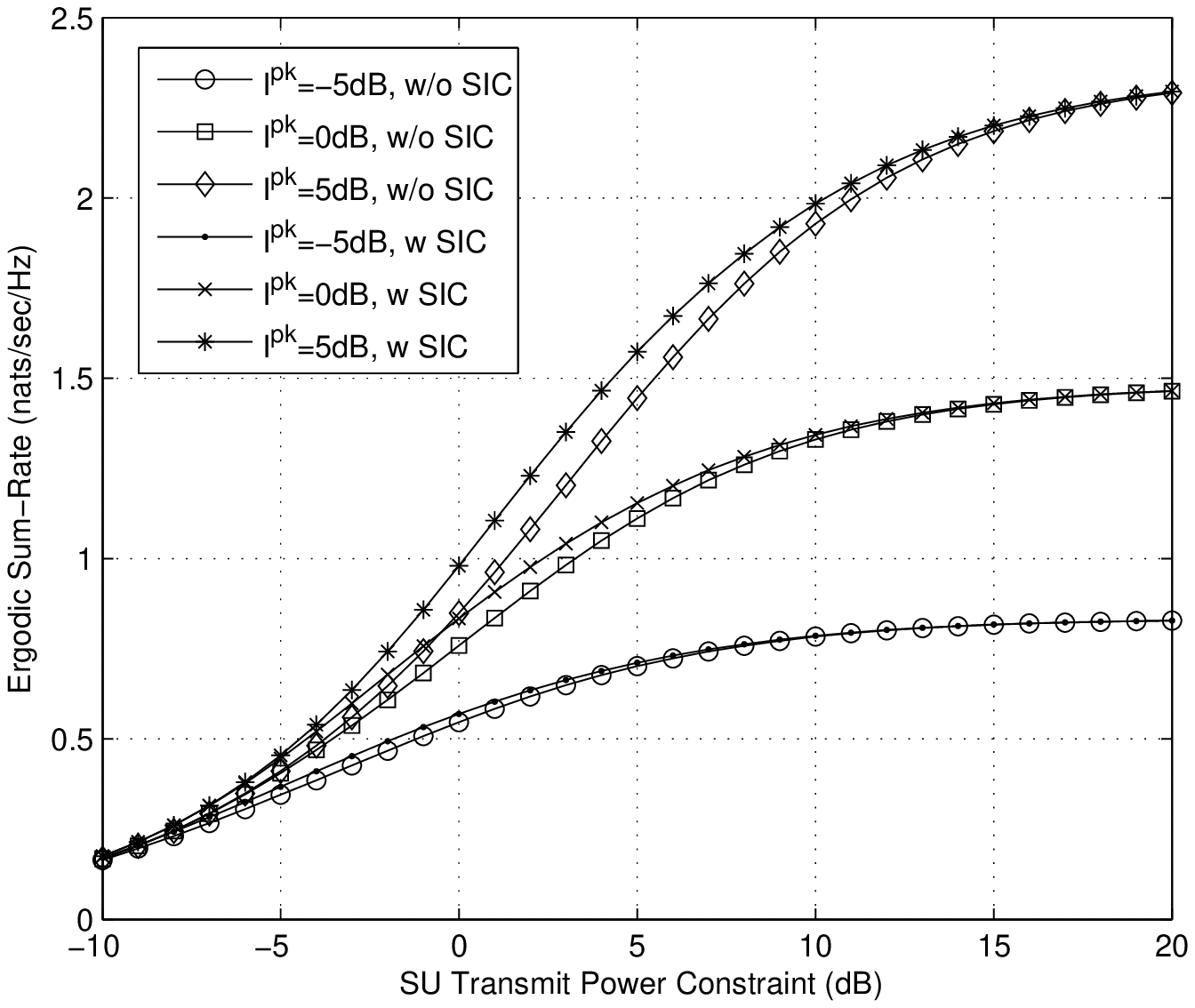}
        \caption{Ergodic Sum-Rate vs. the transmit power of SUs ($K=2$, $\sigma^2=1$, $\mathbb{E}\{h_i\}=\mathbb{E}\{g_i\}=1$)}
        \label{kx2SU}
\end{figure}

\begin{figure}[t]
        \centering
        \includegraphics*[width=12cm]{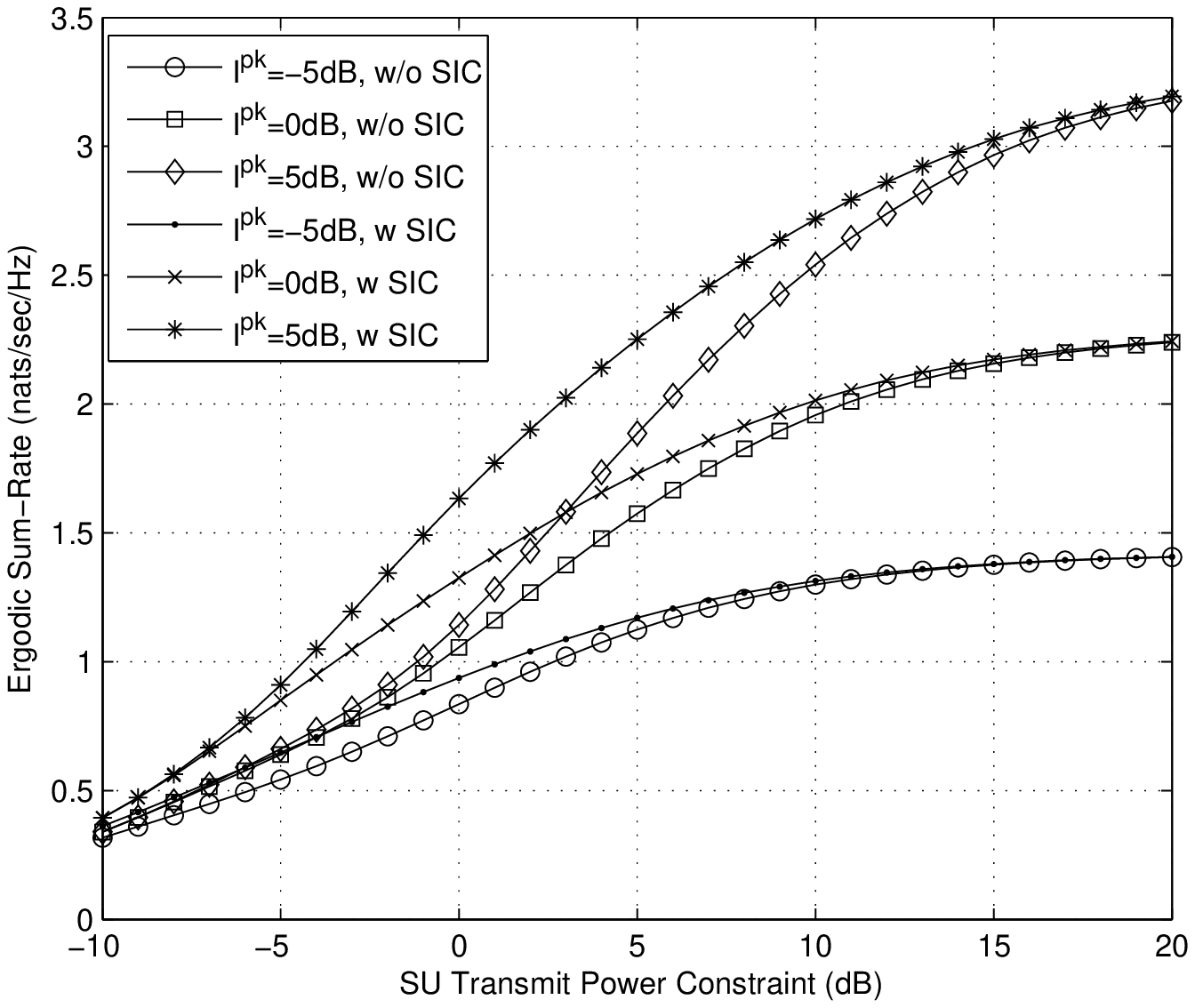}
        \caption{Ergodic Sum-Rate vs. the transmit power of SUs ($K=5$, $\sigma^2=1$, $\mathbb{E}\{h_i\}=\mathbb{E}\{g_i\}=1$)}
        \label{kx5SU}
\end{figure}

\begin{figure}[t]
        \centering
        \includegraphics*[width=12cm]{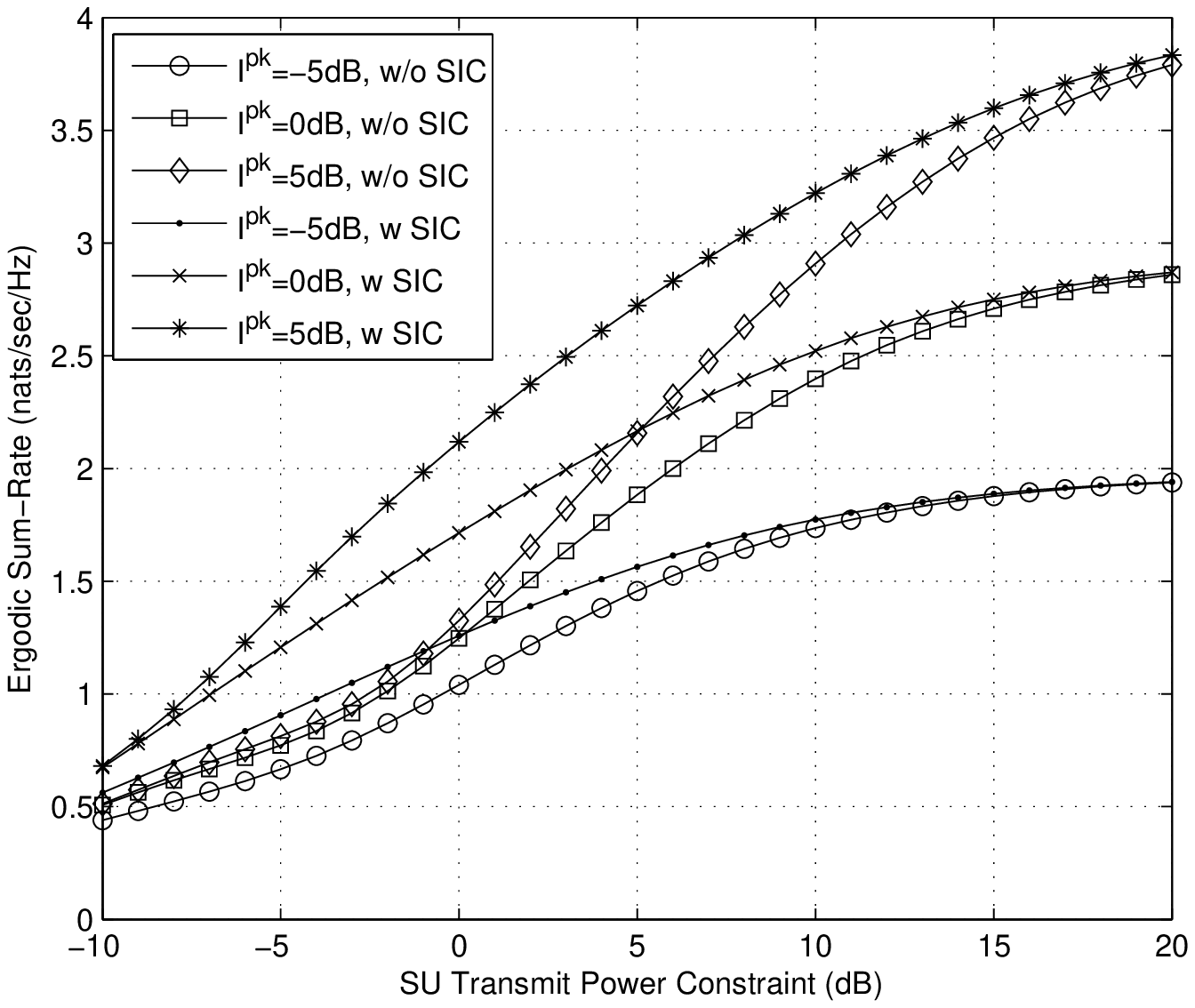}
        \caption{Ergodic Sum-Rate vs. the transmit power of SUs ($K=10$, $\sigma^2=1$, $\mathbb{E}\{h_i\}=\mathbb{E}\{g_i\}=1$)}
        \label{kx10SU}
\end{figure}

\begin{figure}[t]
        \centering
        \includegraphics*[width=12cm]{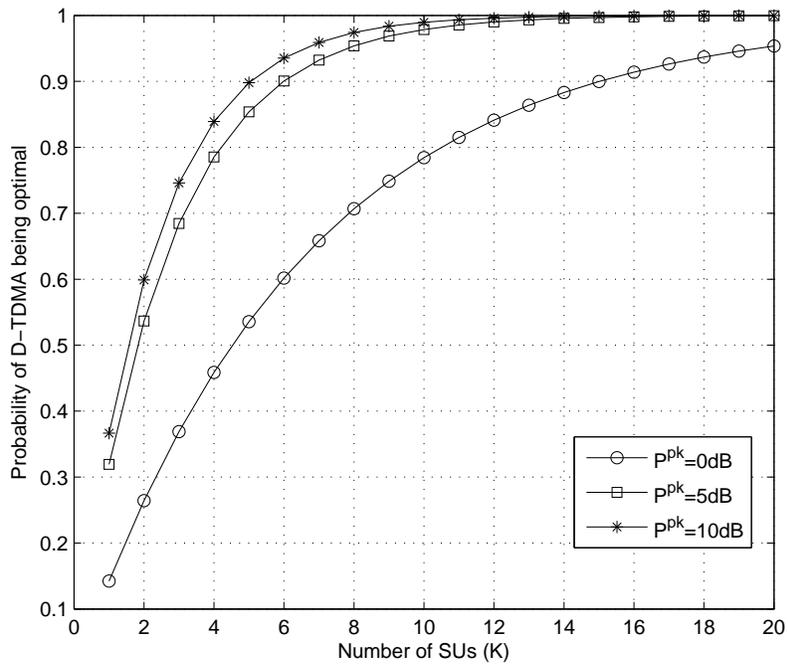}
        \caption{Probability of D-TDMA being optimal vs. the number of SUs ($I^{pk}=0 \mbox{dB}$, $\sigma^2=1$, $\mathbb{E}\{h_i\}=\mathbb{E}\{g_i\}=1$)}
        \label{kx1Sudomnumerical}
\end{figure}


\begin{figure}[t]
        \centering
        \includegraphics*[width=12cm]{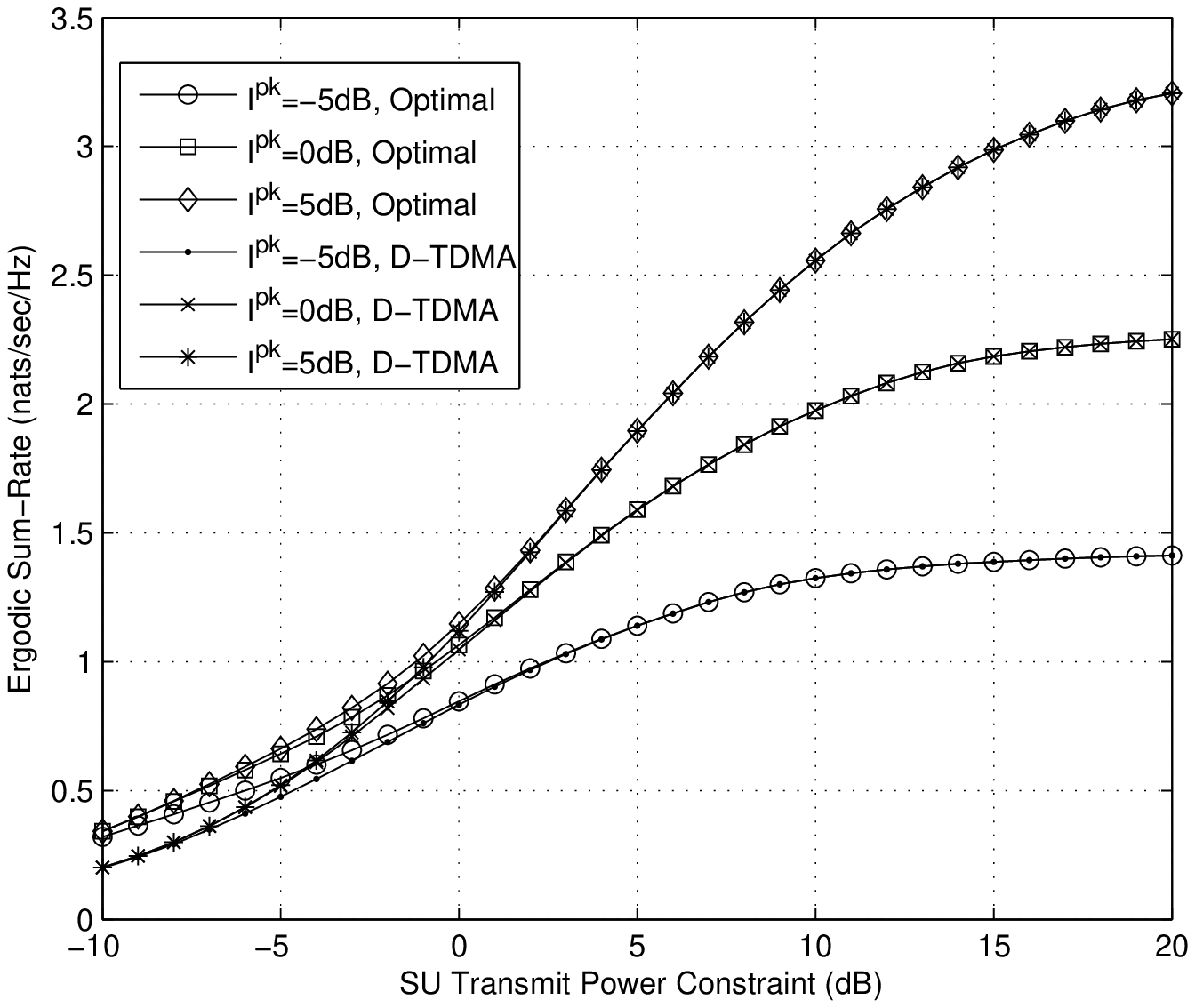}
        \caption{Comparison of the ergodic sum-rate: Optimal vs. D-TDMA  ($K=5$, $\sigma^2=1$, $I^{pk}=0 \mbox{dB}$, $\mathbb{E}\{h_i\}=\mathbb{E}\{g_i\}=1$)}
        \label{kx1Sudomsimu}
\end{figure}

\begin{figure}[t]
        \centering
        \includegraphics*[width=11.5cm]{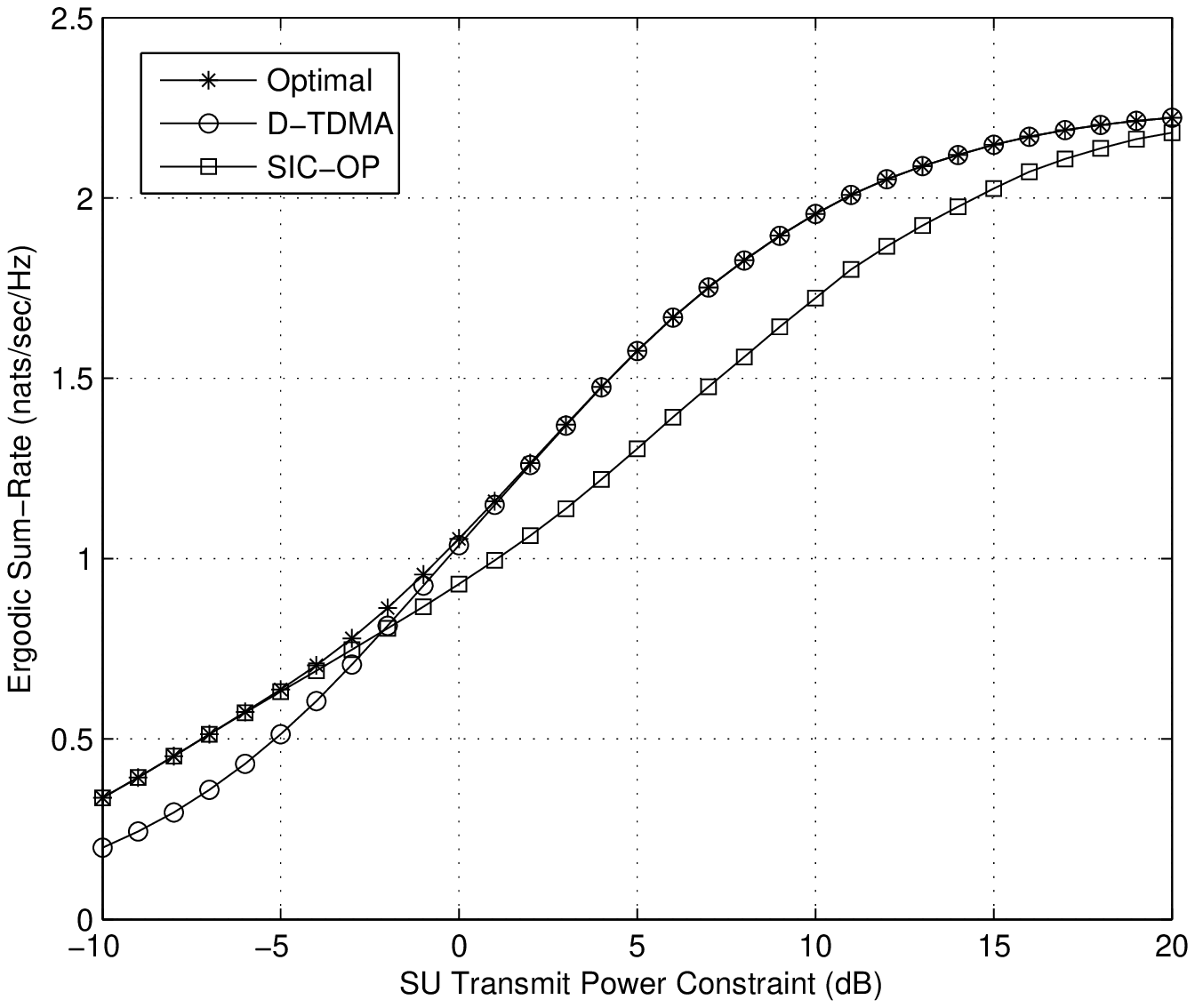}
        \caption{Comparison of the ergodic sum-rate: Optimal vs. D-TDMA vs. SIC-OP  ($K=5$, $\sigma^2=1$, $I^{pk}=0 \mbox{dB}$,  $\mathbb{E}\{h_i\}=\mathbb{E}\{g_i\}=1$)}
        \label{kxSICOP1}
\end{figure}

\begin{figure}[t]
        \centering
        \includegraphics*[width=11.5cm]{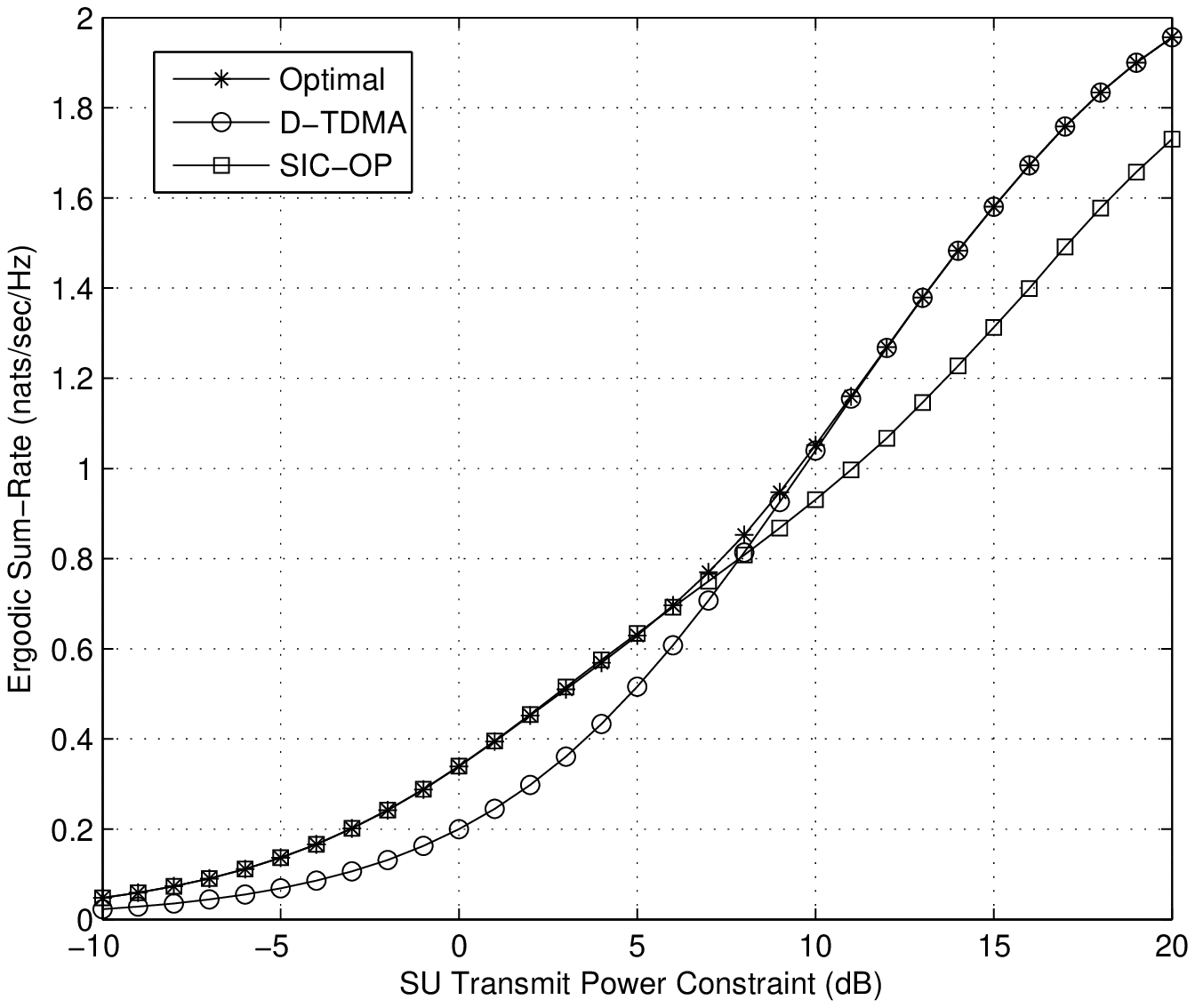}
        \caption{Comparison of the ergodic sum-rate: Optimal vs. D-TDMA vs. SIC-OP  ($K=5$, $\sigma^2=1$, $I^{pk}=0 \mbox{dB}$, $\mathbb{E}\{h_i\}=\mathbb{E}\{g_i\}=0.1$)}
        \label{kxSICOP2}
\end{figure}

\end{document}